\begin{document}
 
\journalname{Social Network Analysis and Mining} 
\title{Generalizing Unweighted Network Measures to Capture the Focus in Interactions\thanks{An earlier version of this paper was presented in the International Workshop on Social Network Analysis (SNAKDD) 2009.}}

\author{Sherief Abdallah}


\institute{S. Abdallah \at
              University of Edinburgh, UK \\
              British University in Dubai, UAE \\
              Tel.: +971-4-367-1964\\
              \email{shario@ieee.org}           
}

\date{Received:  ~~~~~~~~~~~~~~~~~~~~~~~~~~~~~~~~~~  / Accepted:   }

\maketitle 

\begin{abstract}
Unweighted network measures are commonly used to analyze real-world networks due to their simplicity and intuitiveness. This motivated the search for generalizations of unweighted network measures that take weights into account.
We propose a new generalization methodology that capture how focused are the interactions over edges. The less focused the interaction (more uniform over edges) the closer is our generalization to the original unweighted measure. None of the previously developed generalizations capture this aspect of weighted networks. We analyze several real world networks using our generalizations of the degree and the clustering coefficient. The analysis shows that our generalizations reveal interesting observations. 
\end{abstract}

\section{Introduction}

Mining and analyzing complex networks have received significant attention in recent years due to the explosive growth of social networks and the discovery of common patterns that govern wide-range of real world networks \cite{watts98n,barabasi99s,faloutsos99sigcomm,chakrabarti06acs,newman06phr,clauset07arxiv,park07pnas}. The core of mining complex networks is \emph{network measures}. Network measures are functions that summarize the network structure to simpler numeric values. These measures are generally classified into two main classes: measures that ignore edge weights and focus primarily on the structure of the graph, which we call unweighted measures, and measures that take edge weights into account (in addition to the structure), which we call weighted measures.

Unweighted measures received the bulk of researchers' attention, due to their simplicity, intuitiveness, and the relative ease of computation. Such an attention resulted in several influential findings such as the small world (relied on the clustering coefficient) \cite{watts98n} and the power-law (relied on the degree distribution)\cite{barabasi99s,clauset07arxiv}. 
Despite their popularity, unweighted measures ignore important network information: the weights. Consequently, several measures were developed in order to take weights into account. The use of weighted measures, however, is still dwarfed by the use of unweighted measures in analyzing complex networks \cite{boccaletti06pr,leskovec07atkdd,clauset07arxiv,park07pnas}. 

The wide spread usage of unweighted network measures motivated the search for generalizations of unweighted measures that takes weights into account \cite{barrat04pnas,ahnert07pre,opsahl09sn,opsahl10sn}. 
For example, one generalization of the degree computed the expected number of edges incident to a node, assuming the weight of an edge reflects the probability of the edge existing \cite{ahnert07pre}. The most recent generalization of the degree measure used a tunable parameter $\alpha$ to mix both the unweighted degree and the strength using simple multiplication \cite{opsahl10sn}.\footnote{A node's degree is the number of edges incident to the node, while a node's strength is the summation of weights incident to the node. Section \ref{sec-effcar} provides the formal definitions.} 
Section \ref{sec-related} describes in more detail the previous generalizations and the related work. 



We propose in this paper a new generalization methodology that captures the degree of \emph{focus} of interaction. If the interaction is not focused and uniform over edges (uniform weights), then our generalized measure reduces to the original unweighted measure (and this reduction is guaranteed). So for example, if a node has five neighbors and it interacts with all five neighbors equally (no focus), then our generalization of the degree reports the node to have a generalized degree of 5, similar to the unweighted degree. If the node focuses and limits its interaction with only two out of five neighbors, then our generalization of the degree will capture this focus and report the node to have a generalized degree closer to 2 rather than 5.




Consider the simplified scenario in Figure \ref{fig-example-evolution}[a] for illustration. Four students met for the first time in some class. Initially all of the four students are interacting uniformly with one another, forming a clique with equal edge weights. The weight attached to a link (edge) between two students quantifies the amount of interaction between the two students. As time passes, each student focuses her interaction on fewer subset of students (friends). As a result, some links get weaker and eventually disappear, while other links get stronger. The final interaction network shows lesser average degree. 

    \begin{figure}
    \centering
		\includegraphics[width=4.2in]{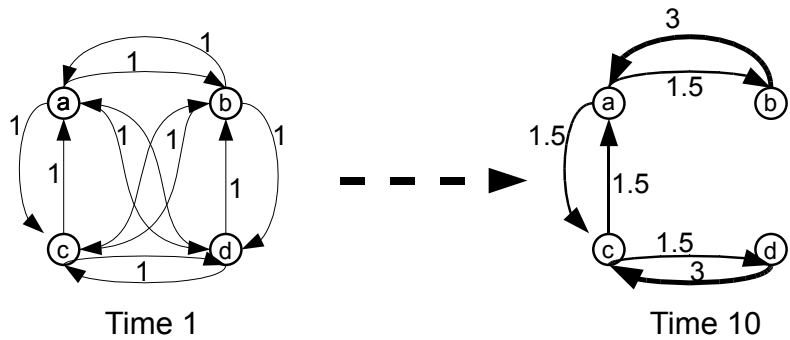}
      \caption{\small A simplified scenario of a dynamic network. The figure shows the evolution of interaction among 4 students in a class. In the beginning (Time 1) every student interacts with every other student equally (every weight equals 1). As time passes, some links get weaker while other links get stronger. At the end (Time 10), only subset of links have weight greater than 0, and not all the weights are equal.}
      \label{fig-example-evolution}
    \end{figure}

Now suppose we monitor the evolution of edge weights over time at different snapshots. For simplicity, suppose the weights of different edges increase (or decrease) linearly from Time 1 to Time 10. So for example, the edge from node $a$ to node $b$ increases by a rate of $0.05$ per time unit. Similarly, the edge from node $c$ to node $b$ decreases by a rate of $-0.1$ per time unit. Notice here that the amount of out-ward interaction of each node (the strength) remains constant over time (and equal 3). However, out-ward interaction becomes focused towards 2 or less neighbors. Figure \ref{fig-example-evolution-hist} shows the corresponding evolution of the degree distribution, the strength distribution, and the $\alpha$-degree distribution for two values of $\alpha$: 0.5 and 1.5 \cite{opsahl10sn}.\footnote{We use the generalized $\alpha$-degree as a representative of the state-of-the-art generalizations \cite{opsahl10sn}. The $\alpha$ values of 0.5 and 1.5 were proposed by the original paper.} Notice that neither the strength nor the $\alpha$-degree reflects the degree distribution of the final network at Time 10. Furthermore, none of the measures capture the change in the \emph{focus} of interaction over time. In other words, although the change in the interaction pattern between students was gradual, this gradual change is not captured by any of the measures (instead, there is a sudden jump in the distribution).

    \begin{figure}
    \centering
		\subfigure[Degree Distribution]{\includegraphics[width=2.25in]{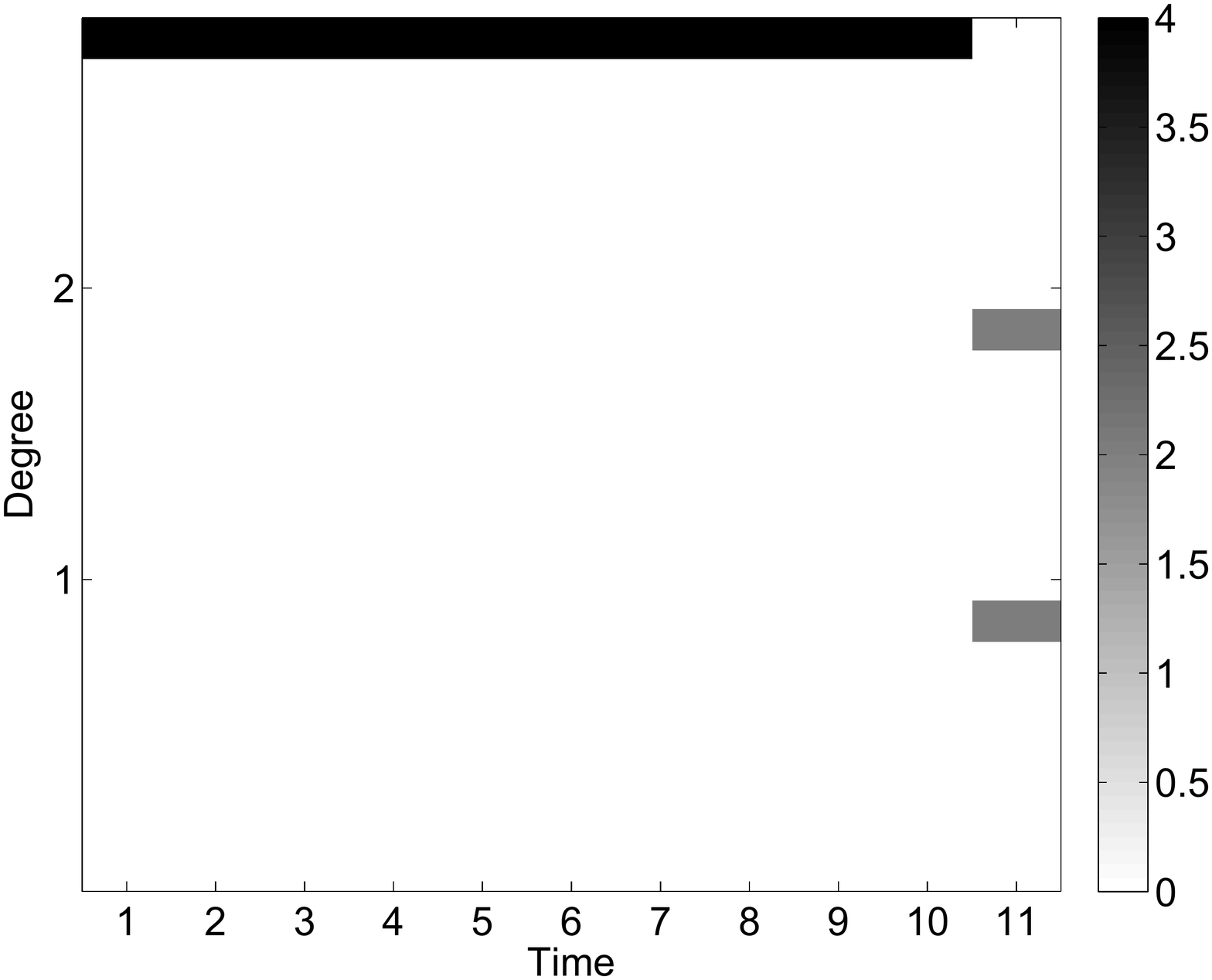}}
		\subfigure[Strength Distribution]{\includegraphics[width=2.25in]{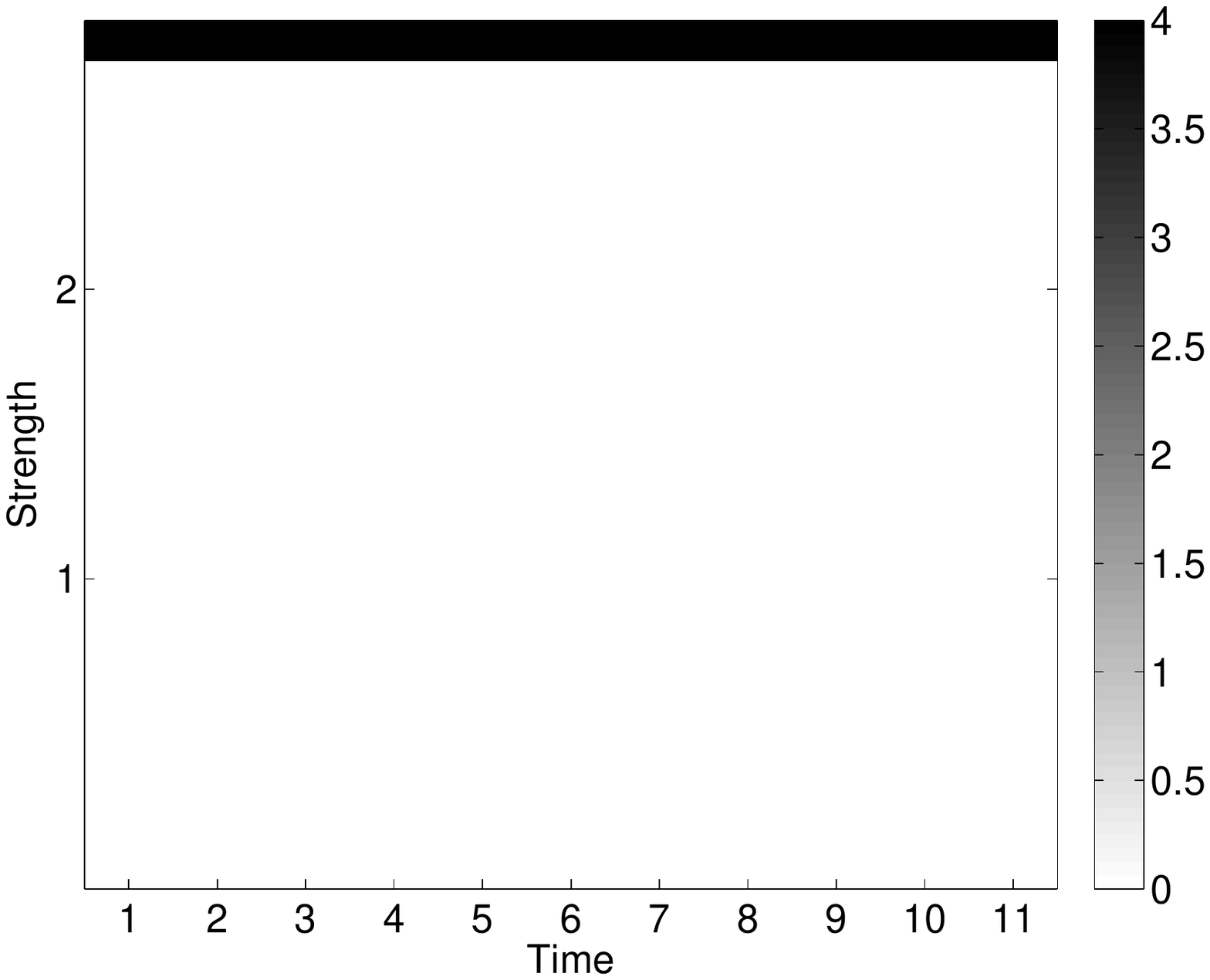}}
		\subfigure[$\alpha$-Degree Distribution, $\alpha=0.5$]{\includegraphics[width=2.25in]{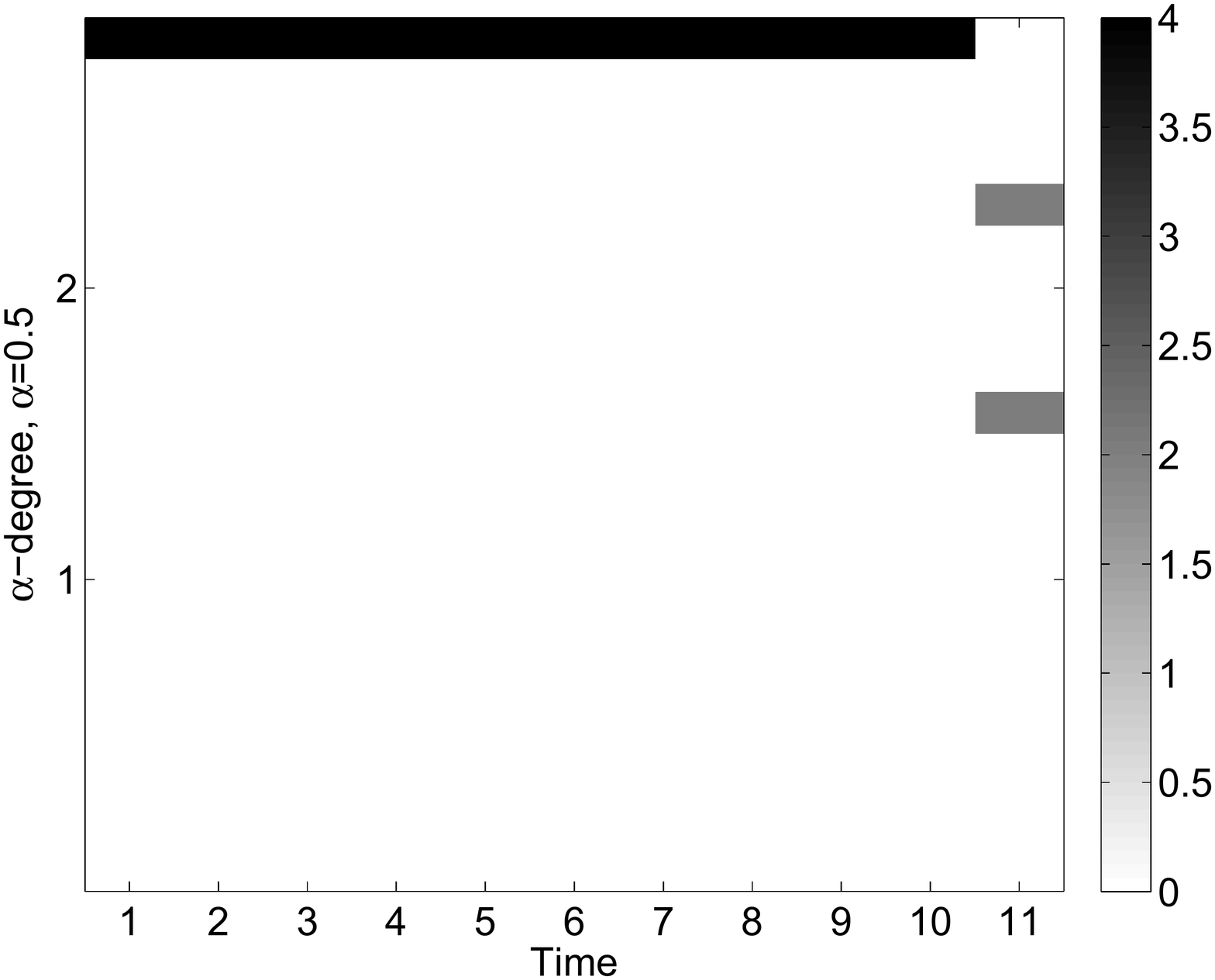}}
		\subfigure[$\alpha$-Degree Distribution, $\alpha=1.5$]{\includegraphics[width=2.25in]{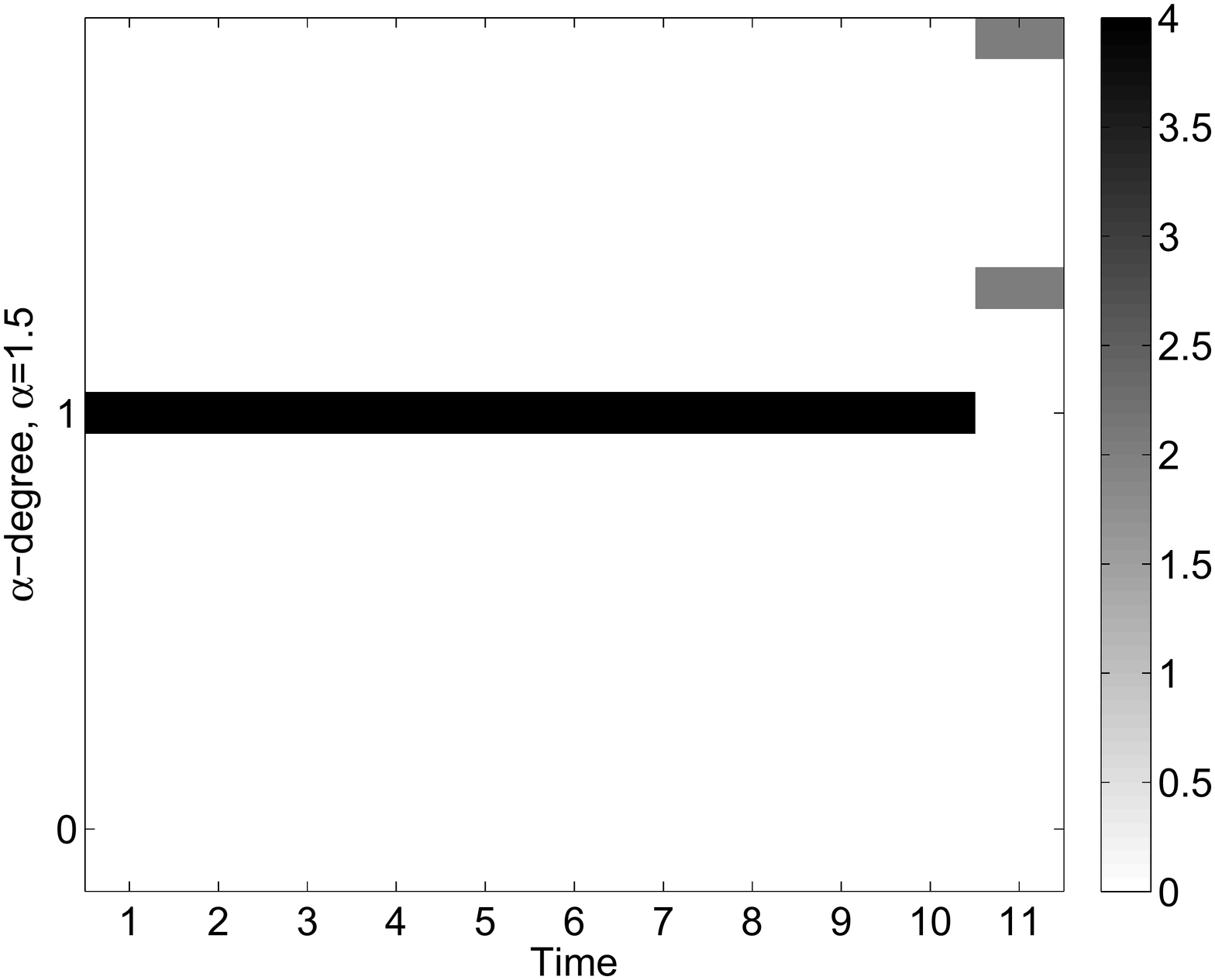}}
		\caption{\small The corresponding evolution of common network measures for the students dynamic network: the degree distribution, the strength distribution, and the $\alpha-$degree distribution for $\alpha =0.5$ and $\alpha=1.5$. Two issues here are worth noting: (1) neither the strength nor the $\alpha$-degree reflects the degree distribution of the final network at Time 10 (2) None of the measures capture the change of interaction \emph{focus} in a gradual continuous manner.}
      \label{fig-example-evolution-hist}
    \end{figure}

Figure \ref{fig-example-evolution-cd} shows the evolution of our degree generalization for the student network scenario. Notice here the continuity of our measure (Figure \ref{fig-example-evolution-cd}[b]), in contrast to the unweighted degree. Notice also that when there at Time 1 and Time 10, our generalized degree is identical to the original unweighted degree.

    \begin{figure}
    \centering
		\subfigure[C-Degree Distribution]{\includegraphics[width=2.25in]{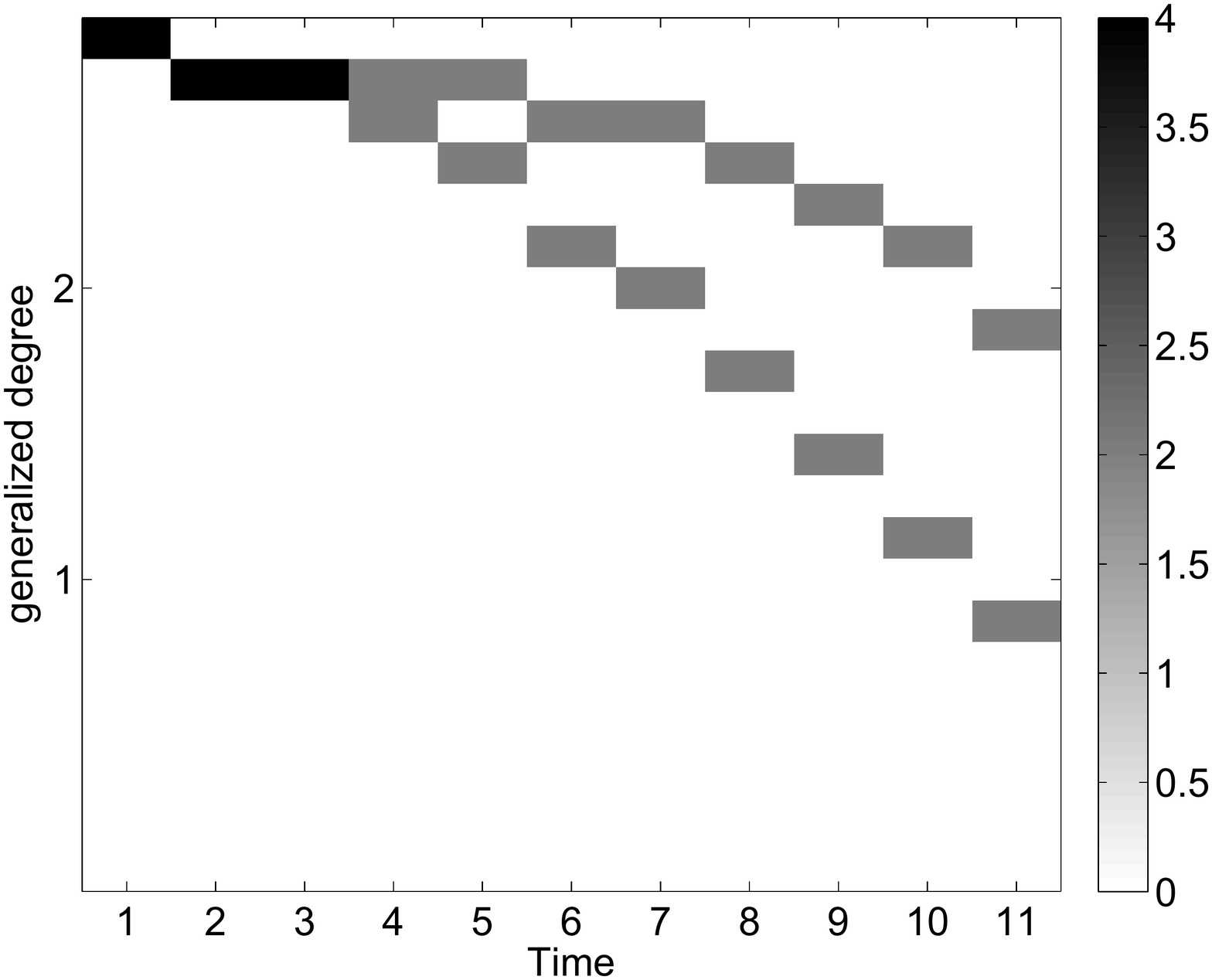}}
		\subfigure[C-degree evolution]{\includegraphics[width=2.25in]{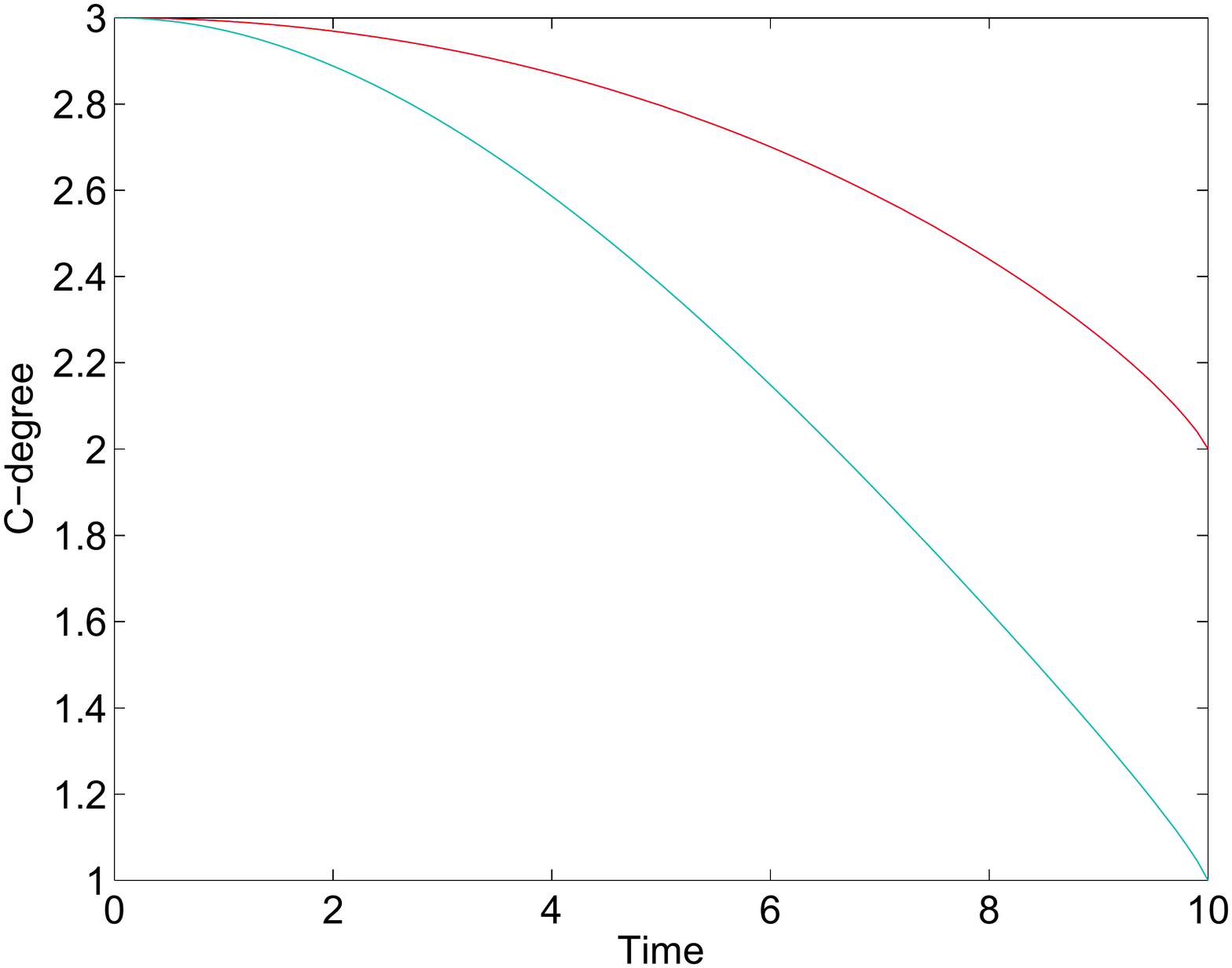}}
		\caption{\small The corresponding evolution of our generalization of the degree measure, the continuous-degree (C-degree) for the students dynamic network: the degree distribution. Part (a) shows the histogram of the C-degree over time, discretized into 20 bins. Part(b) shows the C-degree for the four nodes over finer time-scale. Notice the continuous evolution of our generalization and the direct connection to the unweighted degree at the beginning and at the end of the timescale.}
      \label{fig-example-evolution-cd}
    \end{figure}


We illustrate the applicability of our method by generalizing four unweighted measures: the node degree, the clustering coefficient, the dyadicity, and the heterophilicity.
As a case study, we analyze several real-world, weighted, social networks using two generalized measures: the C-degree and the C-clustering coefficient (the letter C stands for \emph{continuous} and denotes our generalization of an unweighted measure). 


But before we describe our contribution in the following section, it is important to state the limitations and the assumptions of our approach. We assume that an edge's weight reflects the amount of interaction across that edge, which is a reasonable assumption in most real-world domains. For example, an edge weight can represent the number of times a person calls a friend, the number of emails transmitted to an email address, or the number of papers co-authored by two scientists. On the other hand, if weights reflect something like the dissimilarity between neighbors, then our approach is not suitable.
More importantly, our generalization captures the focus of interactions \emph{not the intensity} of the interactions. For example, suppose that in the previous student network scenario we multiply the weights (over all edges) by some constant every time step (instead of adding or subtracting). In this case, both the degree and the C-degree (our generalization)  will not change, but the strength and the $\alpha$-degree will change.

The following section describes the heart of our approach: generalizing the cardinality concept of a set to take weights into account.


\section{Generalizing Measures Using Generalized Cardinality}
\label{sec-effcar}

Several unweighted measures use the cardinality (the size) of some subset of edges in their computation. For example, the node degree is the number of edges incident to a node. The clustering coefficient of a node is the ratio between the number of edges between its neighbors and the number of all possible edges among the neighbors.\footnote{Other examples include heterophilicity and dyadicity. We describe these measures in further detail later.}
The main limitation of the traditional cardinality function (and consequently all the unweighted network measures that use it) is that it ignores edge weights. We show in this section how to generalize the cardinality to capture the focus in interactions. In the following section we show how to use the generalized cardinality to generalize the degree, the clustering coefficient and other unweighted measures that use the cardinality of some edge set.

To put it more formally, let $E'=\{e_1, ..., e_n\} \subseteq E$ be the subset of edges that are used in computing a particular network measure, where $E$ is the set of all network edges and the cardinality of $E'=n=|E'|$. The degree of node $i$ is then defined as $k(i)=|E'_i|$, where $E'_i$ is the set of edges incident to node $i$. Similarly, the clustering coefficient of node $i$ is $z(i)=\frac{|E'_i|}{MAX^N_i}$, where $E'_i$ here is the set of edges between node $i$'s neighbors and $MAX^N_i$ is a constant that equals the maximum number of edges that can exist between these neighbors (i.e. if node $i$'s neighbors formed a clique). 
In weighted networks, each edge  $e \in E$ has a corresponding non-negative weight $w(e) \ge 0$. 
The cardinality implicitly assumes uniform weights over the edges. When weights are not uniform, the cardinality can give an incorrect perception of the actual use of edges. 
Consider the following numeric example. There are four sets of edges with corresponding multisets of weights $W_1 = \{5,5,5,5\}, W_2=\{9,5,5,1\}, W_3=\{9,8,2,1\}$ and $W_4=\{20,0,0,0\}$. The cardinalities of these weight multisets are all the same and equals 4. Intuitively, however, if the weights reflect the interaction over edges, then not all the edges are being used equally and the traditional cardinality becomes a crude approximation. It is possible to define a cutoff threshold weight (an edge is included in the graph if its weight is above a threshold, otherwise the edge is excluded). The computation of any unweighted measure then takes place naturally \cite{chapanond05cmot,kalisky06phr}. Such an approach, however, does not properly handle the focus of interaction among neighbors, and it is not clear how big or small should the threshold be. 

Instead, we want a \emph{generalized cardinality} function that summarizes a set of weighted edges into a single real number and has two properties. If edge weights are equal (no focus), then the function we are looking for should return the traditional cardinality. When the weights are not equal, the function should assign a value between 1 and the traditional cardinality such that the more equal the weights are (less focus), the higher the value. The two properties ensure the consistent connection between the original measure and the generalization. The second property ensures that the generalized measure captures the degree of focus in interactions. Using the previous numeric example, we are looking for a function that assigns 4 to $W_1$, 1 to $W_4$, and values between 1 and 4 for $W_2$ and $W_3$, with the value assigned to $W_2$ greater than the value assigned to $W_3$ (because the two inner weights are equal in case of $W_2$). Such a generalization of the cardinality measure will allow straightforward generalization of many unweighted network measures.
Simple functions for summarizing sets (such as the average, the variance, and the summation) can be very useful in summarizing weights, but they do not satisfy the two desired properties mentioned above. 

The heart of our generalization is a generalized definition of the cardinality of a set of edges $E'$ that takes weights into account, which we call the \emph{effective cardinality}, or $c(E')$:

\[
c(E')= \left\{ \begin{array}{ll}
			0 & \textrm{if $E'$ is empty} \\
			2^{\left(\sum_{e\in E'} \frac{w(e)}{\sum_{o \in E'} w(o)} \log_2 \frac{\sum_{o \in E'} w(o)}{w(e)}\right)} & otherwise
			\end{array} \right.
\]

Intuitively, the quantity $\frac{w(e)}{\sum_{o \in E'}w(o)}$ represents the probability of an interaction over an edge $e$ among all the edges in $E'$. The multiset $\left\{\frac{w(e)}{\sum_{o \in E'}w(o)}: e \in E' \right\}$ is a probability distribution over edges and $H(E')=\sum_{e\in E'}\left[ \frac{w(e)}{\sum_{o \in E'}w(o)} \log_2 \frac{\sum_{o \in E'}w(o)}{w(e)} \right]$ is the entropy of this probability distribution. The entropy measures the disparity between the weights: the more uniform the weights are, the higher the entropy and vice versa.\footnote{Note that the quantity $x \log_2 \frac{1}{x} \rightarrow 0$ as $x \rightarrow 0$ or $x=1$.} The purpose of the power 2 is to convert the entropy back to the number of edges that are effectively being used. In other words, the effective cardinality of edge set $E'$ returns the number of edges of equal weights that has the same entropy as the edges in $E'$.

Before discussing the important properties of the effective cardinality, let us first consider few numeric examples that illustrate the intuition. Consider the multiset\footnote{because more than one edge can have the same weight.} of weights $W=\{10,0.01\}$. The traditional cardinality of this set is 2. However, if weights quantify the amount of interaction over edges, then the interaction is highly focused over one edge and the cardinality should be closer to 1 than 2.  For $W=\{10,0.01\}$, $c(\{10,0.01\})=1.008$, so even though the set $W$ has two edges, the effective cardinality is equivalent to only $1.008$. 
For the numeric example we have mentioned earlier, $c(W_1 = \{5,5,5,5\})=4$, $c(W_2=\{9,5,5,1\})=3.3276$, $c(W_3=\{9,8,2,1\})=3.0219$ and $W_4=\{20,0,0,0\} = 1$, which satisfy the intuitive ordering we described earlier. This ordering is not just by chance or due to a special case, but is actually guaranteed by our proposed effective cardinality.

The effective cardinality satisfies three properties (proofs follow from entropy properties and are given in the appendix):

\begin{enumerate}
\item \textbf{Preserving maximum cardinality:} $\forall E': c(E') \le |E'|$. Furthermore, $c(E') = |E'|$ iff $\forall e \in E': w(e)=C$, where $C$ is some constant. In other words, the effective cardinality is maximum and equals the original cardinality when there is no disparity between weights.  
\item \textbf{Preserving minimum cardinality:} $c(E')=0$ iff $E'$ is an empty set. Furthermore, $c(E')=1$ iff $\exists u \in E': w(u) > 0$ and $\forall v \ne u: w(v)= 0$. In other words, the effective cardinality is one when all edges, except one edge, have zero weights. 
\item \textbf{Consistent partial order over weighted sets:} any function that maps a  set of real numbers (weights) to a single real number imposes an implicit partial order. The effective cardinality imposes, arguably, the simplest partial order that is consistent with the above two properties. If the two sets of weighted edges have the same size, the same summation of weights, and their individual weights are the same except for two edges, then the set with more uniform  weights has higher effective cardinality (a formal definition of this property is given in Lemma \ref{lemma-po} in the appendix). 
\end{enumerate}

The three properties ensure a consistent connection to the original cardinality. The properties also confirm that the effective cardinality (and consequently any generalized measure based on it) captures and is sensitive to the focus of interaction (the disparity between weights).
The effective cardinality, however, is \emph{not sensitive to the scale} of weights. So for example, the multiset of weights $\{1,1\}$ has the same effective cardinality as the multiset $\{10,10\}$. This can be contrasted to the traditional degree, which is sensitive to neither the scale nor the disparity, and the strength, which is not sensitive to the disparity of weights but is sensitive to the scale. We come back to this issue in our analysis in Section \ref{sec-experiment}.
The following section uses the effective cardinality to generalize some unweighted measures. 


\section{Generalizing Unweighted Network Measures Using Effective Cardinality}
\label{sec-general}

In principal, unweighted network measures which use the cardinality of some subset of edges can be generalized using the effective cardinality. In fact, while we limited the discussion in the previous section to sets of weighted edges, all discussed properties apply to any multiset of weights, \emph{even if elements in the set represent subgraphs, not edges}. So for example, if we are interested in counting triangles of three connected vertices (which are used in some definitions of the clustering coefficient), we can use the effective cardinality to replace the discrete count of triangles with a continuous spectrum. 
This section presents four example generalizations of unweighted network measures: the degree, the clustering coefficient, the dyadicity, and the heterophilicity. The resulting generalized measures inherit the three properties of the effective cardinality. Table \ref{tab-summary} summarize these generalizations.

\begin{table}[htbp]
\centering
        \begin{center}
        \begin{tabular}{|l|l|l|l|}
        \hline 
        Measure & Unweighted & Generalized \\
        \hline 
        Degree of node $i$& $|E_i|$ & $c(E_i)$\\
        \hline 
        Clustering coefficient of node $i$& $\frac{|E^N_i|}{MAX^N_i}$ & $\frac{c(E^N_i)}{MAX^N_i}$ \\
        \hline 
        Dyadicity of a graph & $\frac{|E_{within}|}{n_{within}}$ & $\frac{c(E_{within})}{n_{within}}$ \\
        \hline
        Heterophilicity of a graph & $\frac{|E_{across}|}{n_{across}}$ & $\frac{c(E_{across})}{n_{across}}$ \\
        \hline
\end{tabular}
\end{center}
\caption{\small{The summary of the generalization of four unweighted measures, where $E_i$ is the set of edges incident to node $i$, $E^N_i$ is the set of edges between neighbors of node $i$, $E_{within}$ is the set of edges within a class of nodes, and $E_{across}$ is the set of edges across two classes of nodes.}}
\label{tab-summary}
\end{table}

The dyadicity and heterophilicity were recently used to study the correlation between the types of nodes (node classes) and the network structure \cite{park07pnas}. The dyadicity of a graph equals $\frac{|E_{within}|}{n_{within}}$, where $E_{within}$ is the set of edges within a set of nodes of the same type (a class of nodes) and $n_{within}$ is the expected number of edges within the same class of nodes if there was no correlation between the node class and the network structure. Intuitively, the dyadicity quantifies the strength of connections between nodes of the same type and whether it is above average.\footnote{There are other network measures that also quantified the strength of connections within a class (community) of nodes, such as the modularity measure \cite{newman04preB}.} The heterophilicity of a graph equals $\frac{|E_{across}|}{n_{across}}$, where $E_{across}$ is the set of edges across two classes of nodes and $n_{across}$ is the expected number of edges across the two classes if there was no correlation between the node class and the network structure. The heterophilicity quantifies the strength of connections across two classes (communities) of nodes and whether it is above average.
The dyadicity can be generalized, using the effective cardinality, to be $\frac{c(E_{within})}{n_{within}}$ and similarly the heterophilicity can be generalized to be $\frac{c(E_{across})}{n_{across}}$. 

The degree and the clustering coefficient, of a particular node, are two of the most widely used unweighted measures, so the remainder of this section focuses on their generalization.


\subsection{Generalizing the Degree}
\label{sec-case}

A node's degree is the number of edges incident to the node, or $|E_i|$, where $E_i$ is the set of edges incident to node $i$. 
The degree distribution (the histogram of the degrees of all network nodes), has been used extensively to analyze and characterize networks, and helped in discovering common patterns in real world networks, particularly the power law \cite{barabasi99s,faloutsos99sigcomm,boccaletti06pr,clauset07arxiv,leskovec07atkdd}. A degree distribution follows the power law if $P(k) \propto k^{-\alpha}$, where $k$ is the degree, $\alpha$ is a constant, and $P(k)$ is the degree distribution. 
A generalization of the degree measure using the effective cardinality, which we call the continuous degree (C-degree), is straightforward:

\begin{definition}
The \emph{C-degree} of a node $i$ in a network is $r(i)$, where
\[ 
r(i) = c(E_i) = \left\{ \begin{array}{ll}
			0 & \textrm{if $i$ is disconnected} \\
			2^{\left(\sum_{e\in E_i} \frac{w(e)}{s(i)} \log_2 \frac{s(i)}{w(e)}\right)} & otherwise
			\end{array} \right.
\]
\end{definition}

The set $E_i$ is the set of edges incident to node $i$ and $s(i)=\sum_{e \in E_i}w(e)$ is the strength of node $i$.
Figure \ref{fig-cdd} compares the continuous degree distribution to the (discrete) degree distribution in a simple weighted network of four nodes. A node on the boundary has an out degree of 1, while an internal node has an out degree of 2. Intuitively, however, only  one of the internal nodes is fully utilizing its degree of 2 (the one to the left), while the other node (to the right) is mostly using one neighbor only. The C-degree measure captures this and shows that the internal node to the left has a C-degree of $c(\{0.5,0.5\})=2$ while the other internal node has a C-degree of $c(\{0.9,0.1\})=2^{H(0.9,0.1)}=1.38$. 

    \begin{figure}
    \centering
\includegraphics[width=4in]{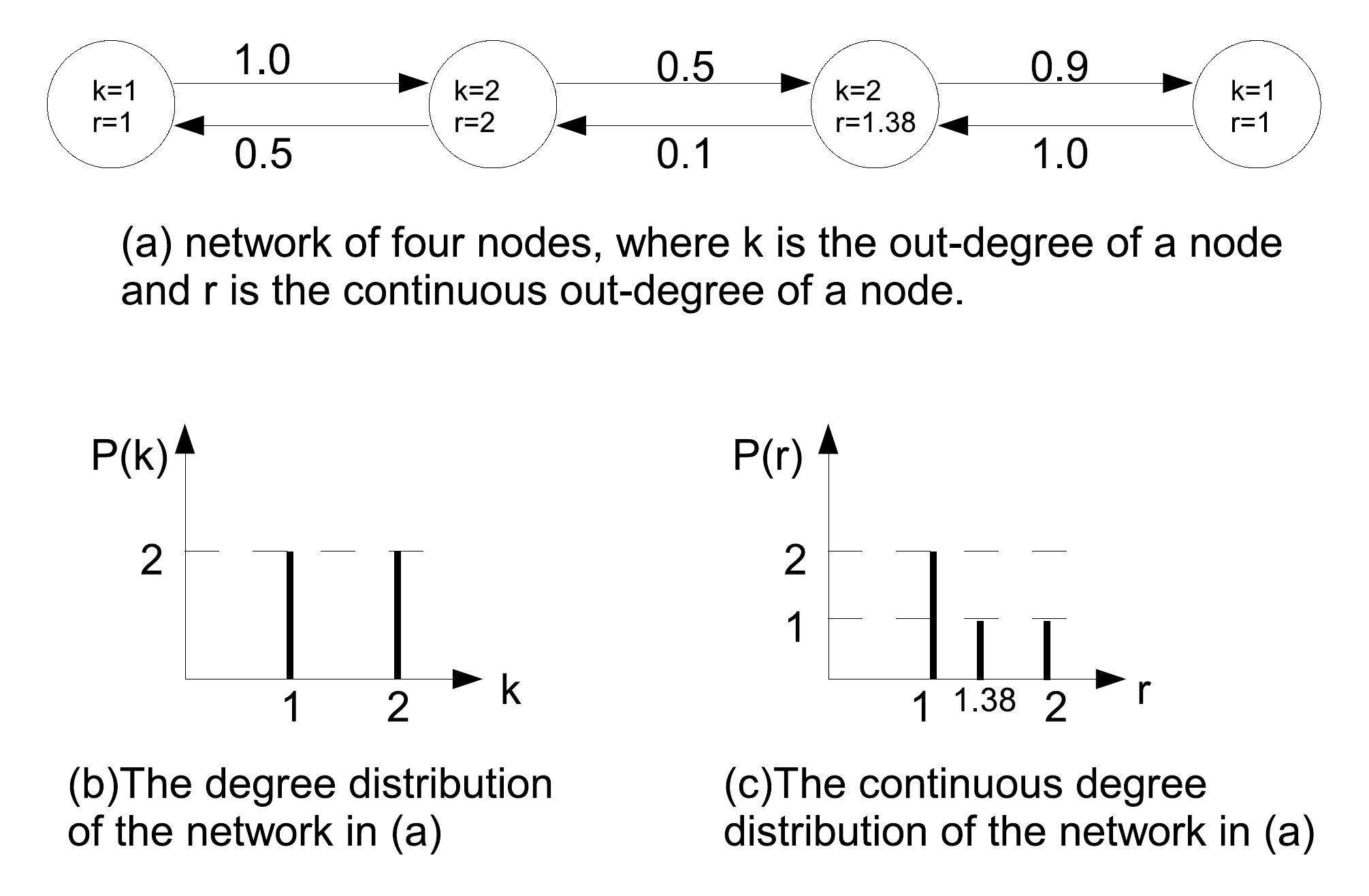}
      \caption{\small Example weighted network of four nodes, comparing the (discrete) degree against the C-degree. The degree distribution illustrates the benefit of taking weights into account in distinguishing nodes.}
      \label{fig-cdd}
    \end{figure}

The C-degree inherits the three properties we described earlier with respect to the traditional node degree. The C-degree of a node is maximum and equals the traditional discrete degree when all the weights incident to the node are equal. The C-degree of a connected node is minimum and equals one if all edges incident to the node have zero weights except one edge that has a weight greater than zero. And finally, everything else being equal, a node with more uniform weights incident to it (less focused interaction) has higher C-degree than a node with less uniform weights incident to it. 

\subsection{Generalizing the Clustering Coefficient}
\label{sec-cluster}

The clustering coefficient is a measure that quantifies the clustering or connectivity among a node's neighbors. 
When averaged over all nodes, the clustering coefficient represents the connectivity of the whole network. The clustering coefficient is an important property for identifying small world networks \cite{watts98n} and is given by the equation $\frac{|E_i^N|}{MAX_i^N}$, where $E_i^N$ is the set of edges between node $i$'s neighbors and $MAX^N_i$ is the maximum number of edges that can be between these neighbors.\footnote{Note that, particularly for directed graphs, some researchers argued that a \emph{clustering signature} would be more suitable in distinguishing networks \cite{ahnert08pre}. In a clustering signature, 7 types of directed triangles are counted separately. The effective cardinality can still be used to replace the discrete counts of these triangles. For the purpose of this paper we focus on the simpler, more widely used definition of the clustering coefficient.}  The generalized clustering coefficient of a node $i$ using the effective cardinality is:

\[ 
o(i)=\frac{c(E_i^N)}{MAX_i^N}
\]

Figure \ref{fig-example-cco} provides a simple motivating example of 3-nodes. The C-clustering coefficient can help in distinguishing different nodes that are deemed indistinguishable using the traditional clustering coefficient. For example, both nodes $A$ and $B$ have a clustering coefficient of $2/2=1$ (neighboring nodes  have two edges between them, out of two possible edges). However, $B$'s C-clustering coefficient is $o(B)=c(\{5,1\})/2=0.78$, while the C-clustering coefficient of $A$ is $o(A)=c(\{5,5\})/2 = 1$.

\begin{figure}[htbp]
\centering
  \includegraphics[width=4in]{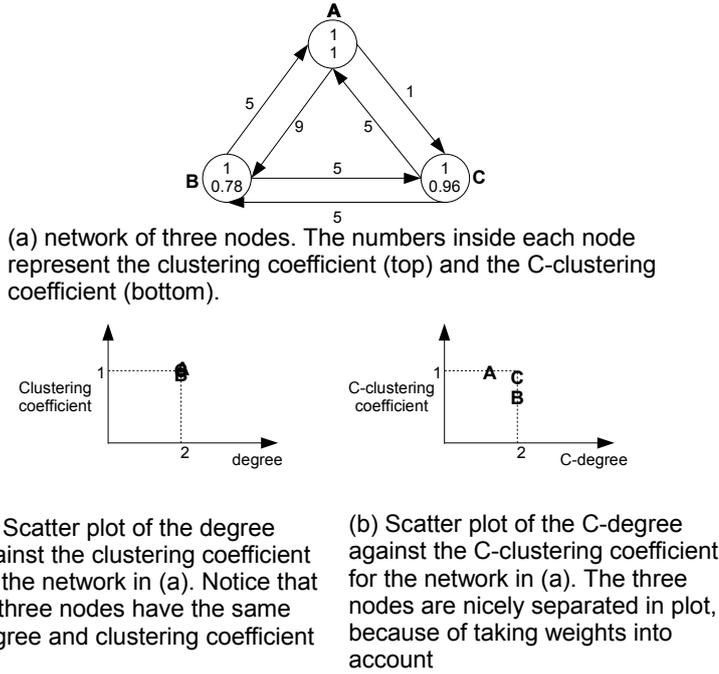}
\caption{{\small Example weighted network of three nodes, comparing the (discrete) clustering coefficient against the C-clustering coefficient. The scatter plot of the degree against the clustering coefficient illustrates the benefit of taking weights into account in distinguishing nodes.}}
      \label{fig-example-cco}
\end{figure}

\section{Experimental Verification}
\label{sec-experiment}
The previous sections provide theoretical analysis as to why our method maintains a connection to the original measures, captures the focus of interaction, and translates that to a continuous spectrum of values. However, several important questions remain unanswered: in a realistic large network, will the proposed generalization be of value? Will it provide more information than the original unweighted measures? Will the generalized measures still maintain connection to their original unweighted measures?

We conduct three types of experimental evaluations (in addition to the theoretical analysis provided earlier) to show the potential of our method: analyzing snapshots of real-world networks (similar to most of the previous work\cite{ahnert07pre,opsahl09sn,opsahl10sn}), analyzing the evolution of a semi-realistic dynamic weighted network, and analyzing the informativeness of our generalization when used to predict labels of network nodes.


\subsection{Analysis of Network Snapshots}
We have analyzed four real world weighted networks\footnote{Available through http://www-personal.umich.edu/~mejn/netdata/}
that capture coauthorships between scientists. Three of which were extracted from preprints on the E-Print Archive \cite{newman01pnas}: condensed matter (an updated version of the original dataset that includes data between Jan 1, 1995 and March 31, 2005), astrophysics, and high-energy theory. The fourth network represents coauthorship of scientists in network theory and experiment \cite{newman06phr}. The weight between two scientists $i$ and $j$ reflects the strength of their collaboration and is given by the equation $w_{ij}=\sum_m{\frac{\delta_i^m \delta_j^m}{n_m-1}}$, where $\delta_i^m=1$ if scientist $i$ was a co-author of paper $m$ and $n_m$ is the number of co-authors for paper $m$\cite{newman01pre}. Table \ref{tab-datasets2} summarize some statistics about the datasets.

\begin{table}
\centering
\begin{tabular}{|r|r|r|r|r|}
\hline
Dataset key& astro-ph & netscience & cond-mat & hep-th\\
\hline
num of nodes & 16706 & 1589 & 40421 & 8361\\
\hline
num of edges & 121251 & 2742 & 175693 & 15751\\
\hline
Weights STD & 0.515 & 0.427 & 0.889 & 1.175\\
\hline
\end{tabular}
\caption{Statistics of the network datasets}
\label{tab-datasets2}
\end{table}

Figure \ref{fig-power} displays the C-degree distribution (CDD) and the (discrete) degree distribution (DD) for the four collaboration network. The figure uses log-log scale with the power law fit based on \cite{clauset07arxiv}.\footnote{Source code adopted from http://www.santafe.edu/\~aaronc/powerlaws/} The CDD follows a pattern similar to the DD, despite taking weights into account. However, the power-law fit for the CDD has steeper decline (higher $\alpha$) than the DD. 
As the degree of a scientist increases, the scientist interacts with a smaller subset of neighbors. While this observation is expected, it raises an interesting question: does the size expected of this collaboration subset remain stable? In other words, on average, does a highly connected scientist collaborates primarily with $X$ number of other scientists, regardless of her degree? or will the number $X$ be a function of the scientist's degree? To answer this question, we define the \emph{degree utilization} metric as the ratio between the C-degree and the degree of a node: $u(v)=\frac{r(v)}{k(v)}$. The degree utilization measures the focus of interaction as a percentage of the original degree. Figure \ref{fig-cdpd} plots the degree utilization against the (discrete) degree for the four collaboration networks. A common pattern emerges in the four networks. For low degrees, the degree utilization is relatively high: scientists with few collaborators rarely focus on subset of these collaborators.  
For nodes with degree greater than some constant the collaboration becomes focused and the degree utilization drops. However, and to our surprise, 
a \emph{cone} is observed, which starts wide at low degrees and gets narrower as the degree increases (the average degree utilization is plotted as a line in the figure). A scientist focuses on a number of collaborators that, on average, is a percentage of the total number of collaborators. This percentage differs from one discipline to another.

\begin{figure}[htbp]
\centering
\includegraphics[width=5in]{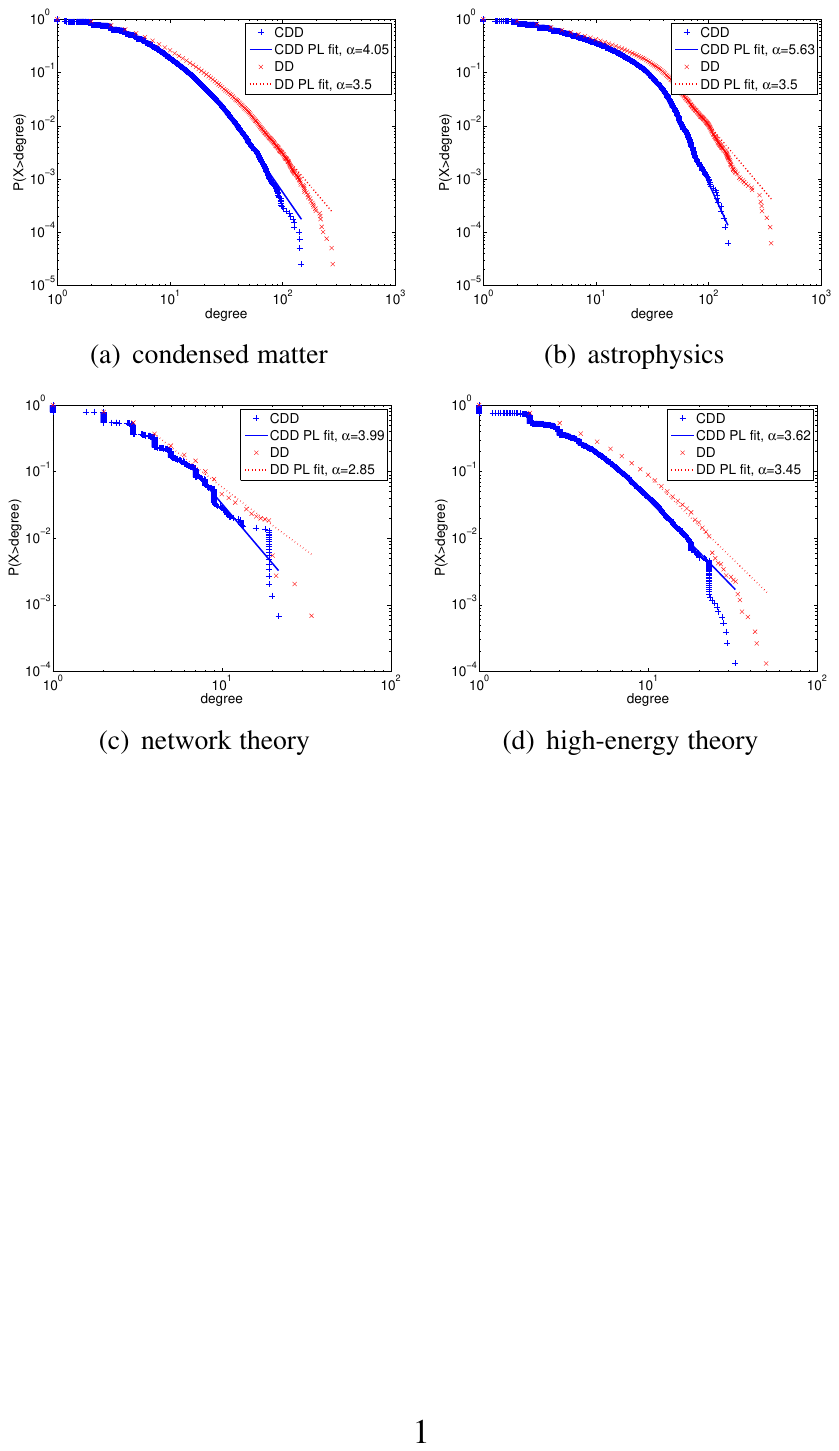}
\caption{{\small Comparing the discrete degree distribution (DD) with the continuous degree distribution (CDD) for the four collaboration networks. The power law fit (PL fit) is also shown with the associated power.}}
      \label{fig-power}
\end{figure}

\begin{figure}[htbp]
\centering
  \includegraphics[width=5in]{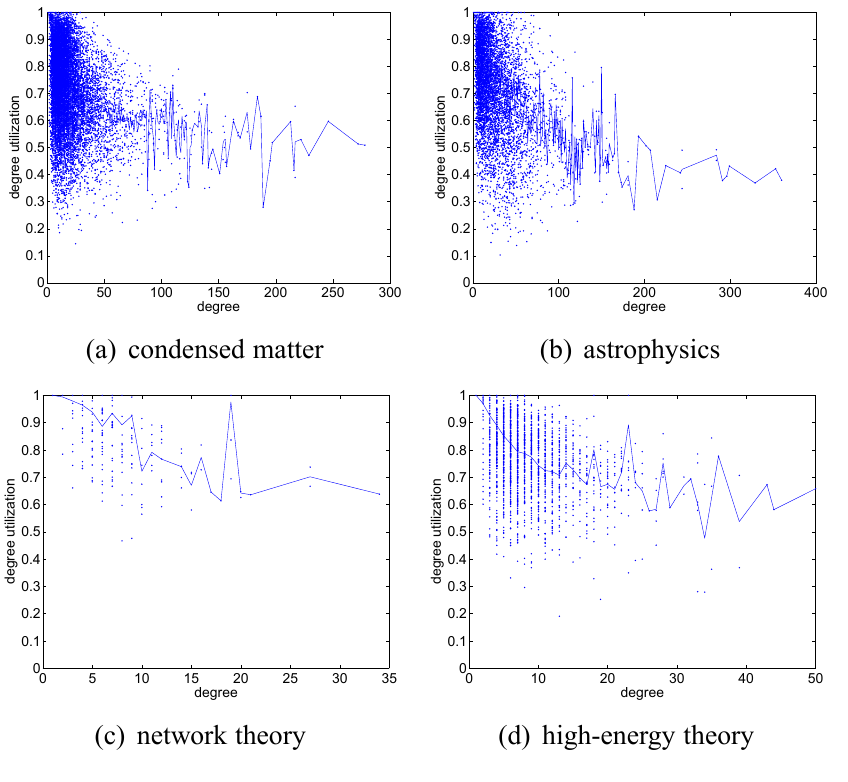}
\caption{{\small Scatter plot of a node degree against its degree utilization for the four collaboration networks. the average utilization per degree is also plotted.}}
      \label{fig-cdpd}
\end{figure}

Figure \ref{fig-cluster-disc} shows the scatter plot of the (discrete) clustering coefficient versus the (discrete) degree (shown in log scale) for the four collaboration networks. The main observation clear from the graph is that in general, the clustering coefficient decreases with the increase of the degree. In other words, the higher the number of collaborators of a scientist, the lesser the density of edges between these collaborators.

\begin{figure}[htbp]
\centering
  \includegraphics[width=5in]{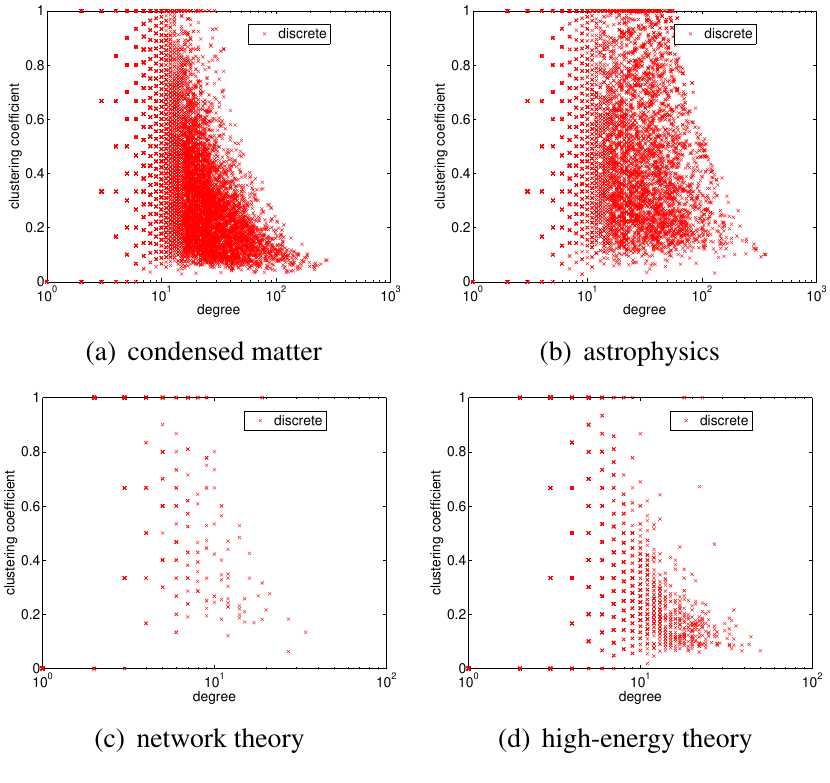}
\caption{{\small Scatter plot of a node's discrete degree against its discrete clustering coefficient for the four collaboration networks.}}
      \label{fig-cluster-disc}
\end{figure}

Figure \ref{fig-cluster-cont} shows the scatter plot of the C-clustering coefficient versus the C-degree for the four collaboration networks. The continuous version of the scatter plot follows the general observation in the discrete case: the clustering coefficient decreases with the increase of the degree. Nevertheless, the scatter plot for the continuous measures covers more area, because both the C-degree and the C-clustering coefficient produce continuous spectrum of values (unlike the discrete degree and the discrete clustering coefficient). Note also that due to maximum cardinality property, Figure \ref{fig-cluster-cont} is shifted towards the origin when compared to Figure \ref{fig-cluster-disc}. More importantly, one can observe an interesting pattern in the continuous scatter plot: nodes with high C-clustering coefficient (above 0.8) tend to have more discrete C-degree. This is clear from the concentration of points with high C-clustering coefficient around the discrete degrees. The same observation is not apparent in points with low clustering coefficient. Using Lemma \ref{lm-cdmax}, this observation means that nodes with incident weights that are more uniform (hence the more discrete degree) tend to cluster with nodes that have more uniform weights among themselves (hence the higher clustering coefficient). We believe this observation reflects research groups: scientists forming cliques of collaboration with almost equal weights over edges (most publications are co-authored by research group members).

\begin{figure}[htbp]
\centering
  \includegraphics[width=5in]{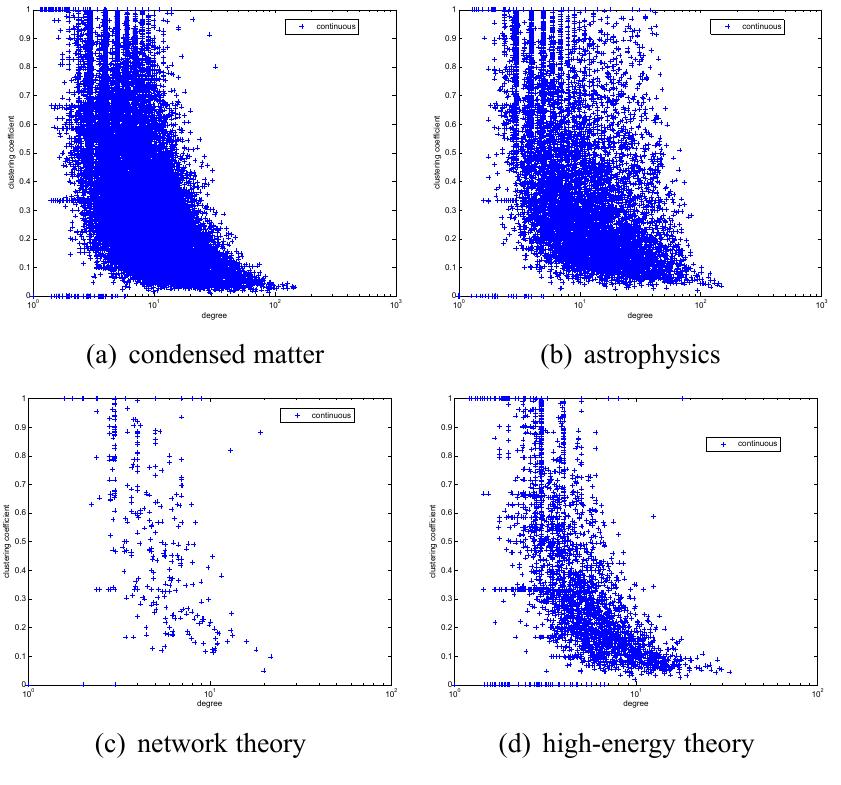}
\caption{{\small Scatter plot of a node's continuous degree against its continuous clustering coefficient for the four collaboration networks.}}
      \label{fig-cluster-cont}
\end{figure}

\subsection{Analysis Using Node Classification}

Recently, researchers discovered that using label-independent network measures can provide useful information in classifying network nodes \cite{gallagher09snakdd}. 
We follow this direction in this section. First, we compute for each node 5 label-independent features: the (discrete) degree, the C-degree, the strength, the clustering coefficient, and the C-clustering coefficient. Then we apply different feature selection algorithms to assess the importance of different features. Each feature selection algorithm selects a subset of features based on some criteria. The criteria differ from one feature selection algorithm to another, but usually takes into account the correlation between a feature (or subset of features) and the node's label.  

We studied 6 labeled datasets, 4 of which represent university websites (University of Texas, Cornell University, Washington University, and University of Wisconsin), while the remaining two represent relationships between industrial companies extracted from news articles (according to two studies).\footnote{The datasets are publicly available at \url{http://netkit-srl.sourceforge.net/data.html}. In a university network, a node represents a web page, which has a label indicating its type (personal web page, department, etc.). A link from one node to another (directed) means there is at least one URL link from the first node to the other. The weight on the link represents the number of such URLs. In an industry dataset, a node represents a company, which has a label indicating its type (transportation, technology, etc.). A link between two nodes exists if the two companies appear in the same news article. The weight represents how many articles the two companies appeared in.}
Table \ref{tab-datasets1} shows some statistics about the 6 datasets: number of nodes, number of links, the percentage of weights that are equal one,\footnote{The percentage of weights that equal one captures the variation in weights more accurately than the standard deviation, which is sensitive to outliers.} and the standard deviation of weights. From the table, we can see that majority of weights equal one for all datasets (ranging from 60\% to almost 80\%), but the Washing. and the Ind-yh datasets have significantly lower percentage (and consequently more variety in weights). 

\begin{table}
\centering
\begin{tabular}{|r|r|r|r|r|r|r|}
\hline
 Data set key & Texas & Cornell & Washing. & Wiscon. & Ind-pr & Ind-yh\\
\hline
num of nodes & 338 & 351 & 434 & 354 & 2189 & 1798\\
\hline
num of edges & 32988 & 2683 & 30462 & 33250 & 13062 & 14165\\
\hline
\% of weights=1 & 0.74 & 0.76 & 0.6 & 0.79 & 0.75 & 0.64\\
\hline
Weights STD & 1.47 & 1.34 & 2.93 & 3.18 & 2.46 & 10.81\\
\hline
\end{tabular}
\caption{Statistics of the labeled network datasets}
\label{tab-datasets1}
\end{table}

As mentioned earlier, 5 features are computed for each node: the degree (D), the clustering coefficient (C), the C-degree (CD), the C-clustering coefficient (CC), and the strength (S). The strength is added here to capture the impact of scale, which is ignored by the first 4 features. Given these features, the idea is to apply feature selection algorithms and observe for each dataset which features are considered significant or more important than others. Since different algorithms may select different features, more than one algorithm were applied. 
We focus here on publicly available algorithms through the WEKA software package, using the default settings \cite{weka09}. We also avoid algorithms that discretize features, because discretizing a continuous generalization diminishes its advantage. 

Table \ref{tab-cfs} shows the outcome of applying the first feature selection algorithm: CfsSubsetEval \cite{weka09}. The algorithm outputs a subset of features that have high correlation with the class and low correlation among themselves. As the table shows, the C-degree was selected in 5 of the 6 datasets, compared to only 4 datasets where the degree was selected. It is interesting to note here that the only dataset where the C-degree was not selected (Ind-pr), the degree was not selected as well. On the other hand, the strength was selected in that dataset, which means that for Ind-pr dataset and the CfsSubsetEval algorithm, the class is more correlated with the scale rather than the degree of focus. It is interesting to observe here that although the strength measure takes weights into account, the degree is still more informative. This observation explains why the degree is still widely used in analyzing weighted networks. Also the C-clustering coefficient was selected in 3 datasets, while the original clustering coefficient was selected only once. 

\begin{table}
\centering
\begin{tabular}{|r|r|r|r|r|r|}
\hline
Texas & Cornell & Washing & Wiscon & Ind-pr & Ind-yh\\
\hline
CD & CD & CD & CD & CC & CD\\
\hline
D & D & D & D & C & \\
\hline
CC & C & CC &  & S & \\
\hline
 & S &  &  &  & \\
\hline
\end{tabular}
\caption{Selected features using CfsSubsetEval}
\label{tab-cfs}
\end{table}

Table \ref{tab-ib1} shows the results of another feature selection algorithm: ClassifierSubsetEval with 1-nearest neighbor classifier. The algorithm evaluates subsets of attributes using the accuracy of the 1-nearest neighbor classifier to prefer one subset over the other. As shown in the table, this feature selection algorithm confirms a clear advantage to our approach. None of the original unweighted measures was selected for any dataset. Unlike the previous feature selection algorithm, however, the C-clustering seems to be more dominant across datasets (selected in all of them). Interestingly, the strength was selected in 5 out of the 6 datasets, complementing the disparity sensitivity with the scale sensitivity for this particular feature selection algorithm.

\begin{table}
\centering
\begin{tabular}{|r|r|r|r|r|r|}
\hline
Texas & Cornell & Washing & Wiscon & Ind-pr & Ind-yh\\
\hline
CC & CD & CD & CD & CD & CC\\
\hline
S & CC & CC & CC & CC & S\\
\hline
 & S & S &  & S & \\
\hline
\end{tabular}
\caption{Selected features subset using ClassifierSubsetEval with 1-nearest neighbor classifier}
\label{tab-ib1}
\end{table}

Table \ref{tab-relief} shows the numerical ranking for the features using the ReliefFAttributeEval algorithm. The algorithm evaluates each attribute by repeatedly sampling an instance and considering the value of the attribute for the nearest instances of the different classes. To simplify comparison, the values for each dataset was divided by the maximum value of the column (higher is better). From the table we can see that the C-degree always has higher value than the original degree. It is interesting to note that the difference is highest (Ind-yh and Washing) in datasets with less percentage of weights equal 1 (Table \ref{tab-datasets1}). The C-clustering coefficient was better than the traditional clustering coefficient in 4 datasets, and worse in 2 datasets. Interestingly, the strength has the highest value in the Ind-pr dataset, which is consistent with the results of the previous feature selection algorithm (as discussed earlier regarding Table \ref{tab-ib1}, Ind-pr is the only dataset where neither the C-degree not the degree was selected).

\begin{table}
\centering
\begin{tabular}{|r|r|r|r|r|r|r|}
\hline
Feature& Texas & Cornell & Washing & Wiscon & Ind-pr & Ind-yh\\
\hline
CD & 1 & 1 & 0.884 & 1 & 0.808 & 1\\
\hline
D & 0.925 & 0.999 & 0.780 & 0.989 & 0.807 & 0.588\\
\hline
CC & 0.633 & 0.503 & 1 & 0.297 & 0.397 & 0.553\\
\hline
C & 0.547 & 0.480 & 0.535 & 0.145 & 0.465 & 0.752\\
\hline
S & 0.354 & 0.395 & 0.473 & 0.246 & 1 & 0.248\\
\hline
\end{tabular}
\caption{Selected features subset using the ReliefFAttributeEval algorithm}
\label{tab-relief}
\end{table}

\subsection{Analysis of Dynamic Networks}

We believe our methodology will be most effective in studying the evolution of weighted networks over time. Unfortunately, we were not able to find any publicly available dataset that provided such an evolution. Most of the available datasets of weighted networks provided only snapshots of a particular network at a particular point in time. 
The other possibility is to synthesize the evolution of a weighted network. Several models were proposed\cite{li07pa}. However, all these models were designed to capture the properties that are known (e.g. power-law of degree distribution and/or the power-law of the strength distribution) and are less helpful in evaluating new measures. We opted for a simpler alternative: growing the weights of some initial network to eventually be equivalent to a publicly-available network snapshot. More formally: let $N_{last}$ be some snapshot of a real world network (e.g. the science collaboration network). Let $N_{0}$ be some initial network that has the same structure as $N_{last}$ but with different initial weights. We then developed two simple mechanisms to evolve a network from $N_0$ to $N_{last}$.

\begin{itemize}
\item Assuming initially all weights are zero, increase the weight of every edge proportional to the edge's weight in $N_{last}$.
\item Assuming initially the interaction for each node is not focused (i.e. all weights are equal), while maintaining the same strength, change the weights gradually in the direction of $N_{last}$.
\end{itemize}

Figure \ref{fig-dynamic-netscience} shows the evolution of the our generalized degree distribution for each mechanism using the netscience collaboration network. As expected, our generalization captures the change in the focus of interaction properly, while other measures fail. When there is no change in the focus, our generalization remains unchanged.

    \begin{figure}
    \centering
		\subfigure[Changing scale]{\includegraphics[width=2.25in]{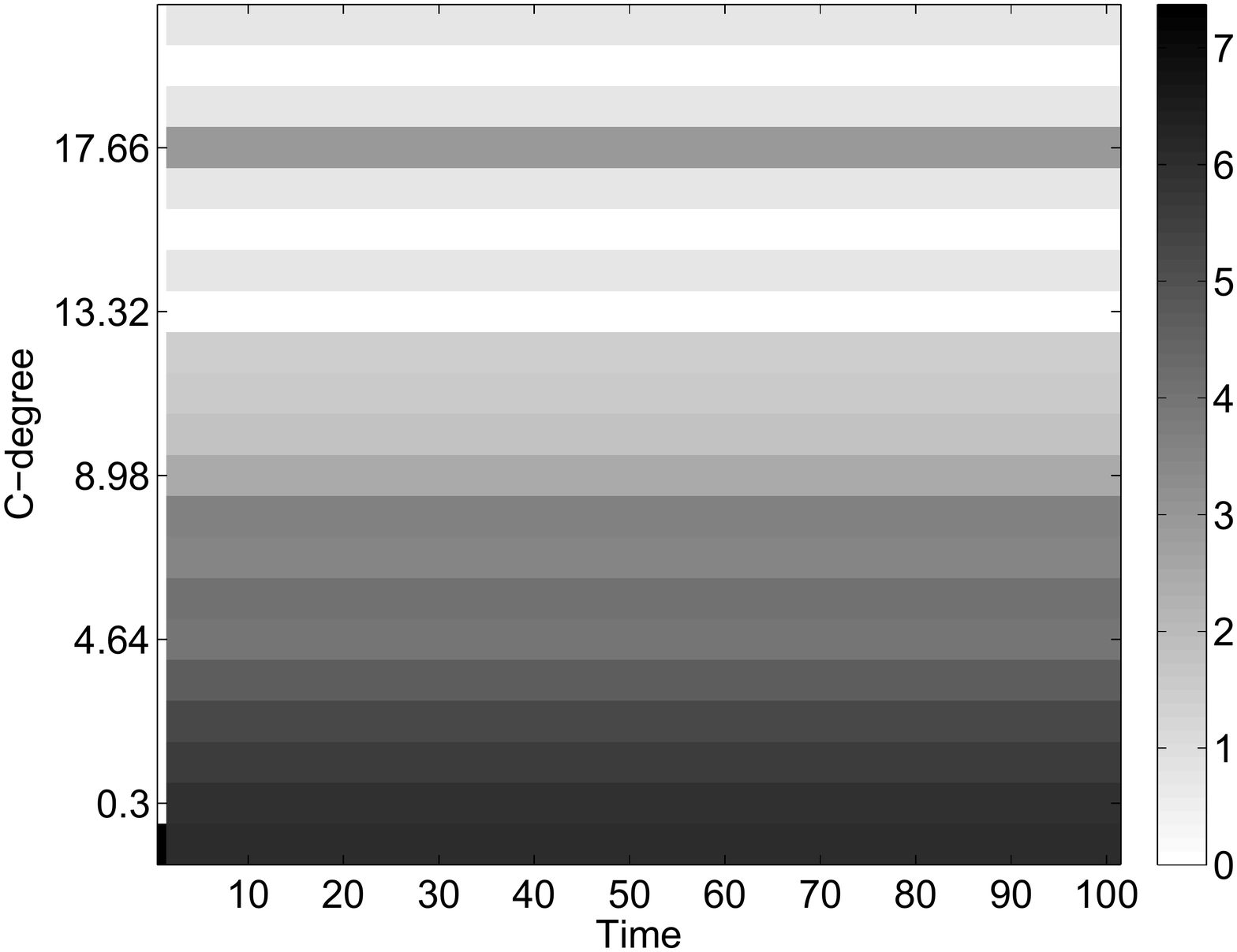}}
		\subfigure[Changing focus]{\includegraphics[width=2.25in]{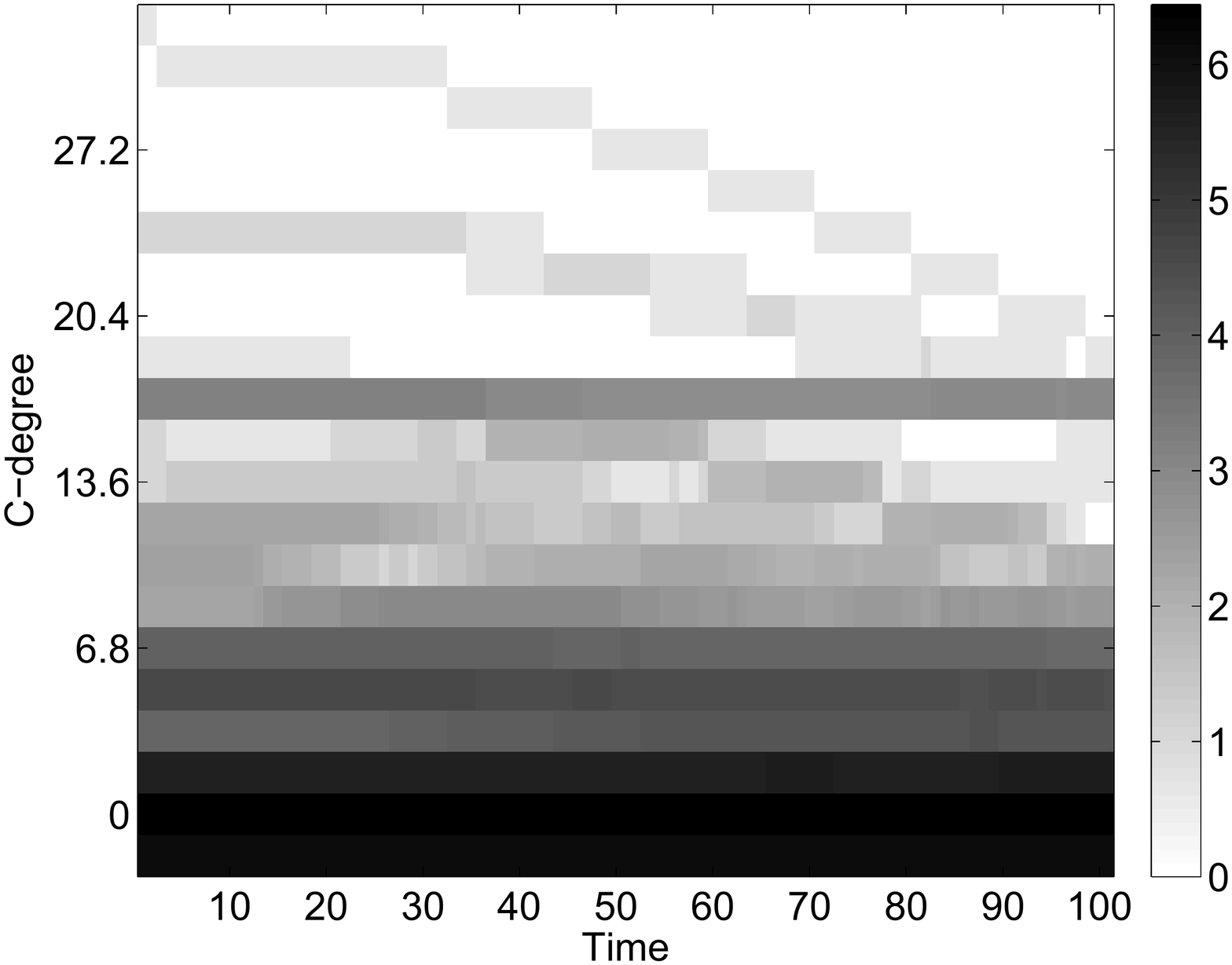}}
		\caption{\small The evolution of the netscience networks under two simple mechanisms: changing the scale and changing the focus. As expected, our generalization captures the change in the focus properly, while other measures fail. When there is no change in the focus, our generalization remains unchanged. Note that the frequency (the gray scale) is in log scale.}
      \label{fig-dynamic-netscience}
    \end{figure}

\section{Related Work}
\label{sec-related}

In general, one can classify weighted network measures into two classes: measures that generalize unweighted network measures to take weights into account, and measures that have no connection to unweighted measures. Surveying all weighted measures that have no connection to unweighted measures is beyond the scope of this paper and have little relevance to the contribution of this paper, which generalizes unweighted network measures. For completeness, we provide here a sample of these measures that are related to some unweighted measure (interested reader may refer to survey papers on the subject \cite{newman04preA,barth05pa,boccaletti06pr}). 
The \emph{strength} of a node is the summation of all weights incident to a node. The strength becomes identical to the node's degree in the very special case when all the weights are equal to 1, but it has very weak partial ordering among nodes. For example, all the nodes in Figure \ref{fig-cdd} have the same strength of 1. 
The \emph{weight distribution} is similar to the degree distribution except that it measures the frequency of a particular edge weight. 
A more recent work \cite{mcglohon08kdd} analyzed a graph's total weight, $\sum_{e\in E} w(e)$, against the graph's total number of edges, $|E|$, over time. That work also analyzed the degree of a node, $k(v)$, against the node's strength, $s(v)$. While useful, the above measures neither captured the degree of focus in interactions nor provided a methodology for generalizing unweighted measures, unlike our proposed generalization.
The network measure $Y(v)=\sum_{e \in E(v)} \left(\frac{w(e)}{s(v)}\right)^2$ successfully captured the disparity of interaction within edges incident to node $v$ \cite{almaas04n}.
However, unlike our work here, the $Y$ measure is not a \emph{generalization} of the degree measure. 


There have been several attempts to generalize specific unweighted measures. The weighted clustering coefficient \cite{barrat04pnas} was an attempt to generalize the clustering coefficient. The generalization relied on an alternative definition of the clustering coefficient that used triplets \cite{watts98n}. A triplet connected to a node is a subgraph containing the original node in addition to two other connected neighbors. The intuition behind the weighted clustering coefficient for node $i$ is to weigh every edge between two of its neighbors, $j$ and $k$, using the weights on edges $(i,j)$ and $(i,k)$. However, unlike our generalization, the weight on edge $j,k$ was ignored. 
A recent attempt to generalize the clustering coefficient used the ratio between the total \emph{value} of closed triplets and the total value of all triplets \cite{opsahl09sn}. The authors proposed four functions to evaluate (summarize) weighted triplets: the arithmetic mean, the geometric mean, the minimum, and the maximum. The four proposed functions (and therefore the proposed generalization) have weaker distinguishing powers than our proposed method. For example, all the nodes in Figure \ref{fig-example-cco} will have a generalized clustering coefficient of 1 if the method in \cite{opsahl09sn} is used (this limitation was previously reported \cite{opsahl09sn}). On the other hand, our proposed generalized clustering coefficient successfully distinguishes all the three nodes.

The ensemble approach \cite{ahnert07pre} provides a methodology for generalizing almost all unweighted network measures. The first step of the method was to normalize edge weights to ensure all weights are between 0 and 1 (more restrictive than our approach, which only assumes weights are non-negative). The next step was to randomly generate an ensemble of unweighted networks from the original weighted network, where the weight of an edge represented the probability of generating the edge. The final step was to compute the generalized unweighted measure as the average of the unweighted measure for each network in the ensemble. Because the ensemble approach relies on computing the expectation, the method could not provide any partial ordering guarantee. For example, suppose two nodes have the following sets of incident edge weights $A=\{9,8,2,1\}$ and $B=\{9,5,5,1\}$. Under the ensemble approach, both nodes will have the same generalized degree. Using our proposed generalization, node $B$'s degree is guaranteed to be greater than node $A$'s C-degree by Lemma \ref{lemma-po}.
Another side-effect of relying on the expectation is the limited connection to the original unweighted measures. Only if the weights were normalized such that all weights equaled exactly 1 would the ensemble approach provide generalizations equal to the original unweighted measures.

The most recent attempt to generalize the degree (along with betweenness and closeness) relied on mixing total weights (strength) with the cardinality (degree) \cite{opsahl10sn}. The generalized degree based on this method is the $\alpha$-degree that we described earlier: $\alpha$-degree = $s^\alpha k^{1-\alpha}$. Although the generalization methodology were applied to the betweenness and closeness centrality measures, only the degree generalization was evaluated. More importantly, the generalization has a tuning parameter without clear guidelines regarding how to set it, unlike our methodology which is parameterless. Also the connection to the original unweighted measure does not depend on the weights, but on the tuning parameter (when $\alpha$ equals 0). As a result, the $\alpha$-generalization is not really a generalization, but rather a new class of weighted measures.

\section{Conclusion}
\label{sec-conclude}
We proposed a new methodology for generalizing measures of unweighted networks. Our method captures the degree of focus in the interaction over edges, and reduces to the original unweighted measure if the interactions are not focused. We illustrated the applicability of our method by generalizing four unweighted network measures including the degree and the clustering coefficient. We complemented the theoretical analysis of our proposed generalization with the experimental analysis of 3 types of datasets. The analysis of labeled dataset classification showed how feature selection algorithms prefer our generalized measures to the original unweighted measures, across different datasets and different feature selection algorithms. The analysis of un-labeled datasets (collaboration networks) showed that the generalized degree distribution follows a similar pattern to the traditional degree distribution, but with steeper decline (larger exponent of the power-law fit). The analysis exposed scientific research groups and showed that, in expectation, scientists focus their collaboration on fixed percentage, regardless of the number of collaborators. 

Due to the large body of research that relied on unweighted measures for analyzing networks, there are several interesting directions for following up on this work. We are currently investigating the use of the C-degree in simple network navigation to replace the degree. We are also investigating the use of the generalized measures to study the evolution of interaction networks.

\bibliographystyle{spmpsci}      

\section*{Appendix A: Proofs of Effective Cardinality Properties}
\begin{theorem}
The effective cardinality satisfies the three properties described above: the maximum cardinality, the minimum cardinality, and the consistent partial ordering.
\end{theorem}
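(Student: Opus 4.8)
The plan is to observe that for nonempty $E'$ the effective cardinality is exactly $c(E')=2^{H(E')}$, where $H(E')$ is the Shannon entropy (in bits) of the probability distribution $p_e=w(e)/\sum_{o}w(o)$ induced on the edges, and that $x\mapsto 2^x$ is strictly increasing. All three properties then follow by transporting the corresponding standard facts about Shannon entropy through this monotone map; I would dispatch the empty-set and all-zero-weight degenerate cases separately by appeal to the definition and the convention $x\log_2(1/x)\to 0$ as $x\to 0$.

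For the maximum-cardinality property I would invoke the standard bound $H(p)\le\log_2 m$ for a distribution supported on $m$ points, with equality iff $p$ is uniform (a one-line consequence of Gibbs' inequality, or of Jensen applied to the concave $\log_2$). Since the support has size at most $|E'|$, this gives $c(E')=2^{H(E')}\le 2^{\log_2|E'|}=|E'|$. Tracking the two equality conditions --- uniformity on the support, and the support being all of $E'$ --- yields $c(E')=|E'|$ iff every edge carries the same positive weight $C$. For the minimum-cardinality property, nonnegativity $H(p)\ge 0$ gives $c(E')\ge 2^0=1$ for nonempty $E'$, with equality iff $H(p)=0$, i.e.\ iff $p$ is a point mass, which is precisely the condition that a single edge has positive weight and all others vanish; the value $0$ is reserved by definition for the empty set.

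The partial-ordering property (Lemma~\ref{lemma-po}) is where the real work lies, and I would reduce it to a single-variable concavity argument. Because the two compared edge sets share the same total weight $S=\sum_o w(o)$ and differ in only two edges, every term of $H$ is identical except the two terms attached to the differing edges: the normalization $S$ is unchanged, so the remaining probabilities --- and hence their entropy contributions --- are untouched. Writing the contribution of the two variable edges (whose weights sum to a fixed $s$) as $g(a)=\tfrac{a}{S}\log_2\tfrac{S}{a}+\tfrac{s-a}{S}\log_2\tfrac{S}{s-a}$, a direct computation gives $g'(a)=\tfrac{1}{S}\log_2\tfrac{s-a}{a}$, so $g$ is strictly increasing on $(0,s/2)$, strictly decreasing on $(s/2,s)$, symmetric about $a=s/2$, and strictly concave. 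Hence $g$ is strictly larger for the pair whose weights are closer together (the more uniform pair), and this transfers through $2^{(\cdot)}$ to the strict inequality $c(E_1)>c(E_2)$.

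I expect the main obstacle to be bookkeeping rather than conceptual: stating the formal \emph{more uniform} hypothesis of Lemma~\ref{lemma-po} precisely (most cleanly, that the unordered pair of differing weights in one set is nested inside the interval spanned by the pair in the other, equivalently that the spread $|a-b|$ is strictly smaller), and handling the boundary cases where some weights are zero so the entropy sums must be read through the limiting convention. The structural point I would emphasize, since it is what makes the third property clean, is that the fixed-total-weight assumption keeps the normalization $S$ constant, which is exactly what localizes the comparison to the two changed coordinates.
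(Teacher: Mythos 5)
Your proposal is correct and follows essentially the same route as the paper: all three properties are transported through the monotone map $2^{(\cdot)}$ from standard entropy facts, and the partial-order property is localized to the two differing weights and settled by the same single-variable function (your $g(a)$ is, up to the substitution $x=a/S$, exactly the paper's $h(C,x)$ in Lemma~\ref{lemma-ent}, shown symmetric about and maximized at the midpoint). Your treatment is if anything slightly more careful, since you make the equality conditions in the maximum/minimum bounds explicit via Gibbs'/Jensen's inequality rather than asserting them.
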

\begin{proof}
The proof follows from the following three lemmas.
\end{proof}

\begin{lemma}
\label{lm-cdmax}
The effective cardinality satisfies the maximum cardinality property.
\end{lemma}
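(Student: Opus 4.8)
The plan is to recognize that the exponent in the definition of $c(E')$ is exactly the Shannon entropy $H(E')$ of the probability distribution $p_e = w(e)/\sum_{o \in E'} w(o)$ over the $n = |E'|$ edges, so that $c(E') = 2^{H(E')}$. Since $x \mapsto 2^x$ is strictly increasing, the inequality $c(E') \le |E'| = 2^{\log_2 n}$ is equivalent to the maximum-entropy bound $H(E') \le \log_2 n$, and the equality cases transfer verbatim. Thus the entire lemma reduces to the standard fact that the entropy of a distribution on $n$ outcomes is maximized, with value $\log_2 n$, exactly at the uniform distribution. The empty case $c(\emptyset) = 0 \le 0$ is immediate and I would dispatch it first.

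First I would establish the bound $H(E') \le \log_2 n$. The cleanest route is Jensen's inequality applied to the strictly concave function $\log_2$: writing $H(E') = \sum_e p_e \log_2 (1/p_e)$ and reading this as the expectation of $\log_2(1/p_e)$ under $p$, concavity yields
\[
H(E') = \sum_{e} p_e \log_2 \frac{1}{p_e} \le \log_2 \left( \sum_{e} p_e \cdot \frac{1}{p_e} \right) = \log_2 n .
\]
Equivalently, one can invoke the non-negativity of the Kullback--Leibler divergence between $p$ and the uniform distribution $u_e = 1/n$, since $D(p \,\|\, u) = \log_2 n - H(E') \ge 0$ by Gibbs' inequality.

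The equality analysis is the part that needs care. Strict concavity of $\log_2$ makes Jensen's inequality tight precisely when $1/p_e$ is constant across the support of $p$, i.e. when all edges of positive weight share the same probability, hence the same weight $C > 0$ (the constant must be positive, otherwise the normalizing sum vanishes and the distribution is undefined). The remaining subtlety is edges of zero weight: by the convention $x \log_2 (1/x) \to 0$ they contribute nothing to $H$, so if some edge has zero weight while others are positive, the bound collapses to $\log_2(\text{number of positive-weight edges}) < \log_2 n$ and equality fails. Hence $c(E') = |E'|$ forces every weight to equal a common positive constant, matching the stated condition. I would close by pushing back through $2^{(\cdot)}$: in the uniform case $p_e = 1/n$ gives $H = \log_2 n$ and $c(E') = 2^{\log_2 n} = n = |E'|$, while in every other case the strict inequality gives $c(E') < |E'|$.

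The main obstacle is not the inequality itself, which is textbook, but making the equality characterization airtight in the presence of zero weights, so that the two conditions ``$c(E') = |E'|$'' and ``all weights equal'' are genuinely equivalent in both directions rather than merely implied one way.
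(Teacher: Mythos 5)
Your proof is correct and follows essentially the same route as the paper's: identify the exponent as the Shannon entropy of the edge-weight distribution and reduce the lemma to the fact that entropy on $n$ outcomes is maximized, with value $\log_2 n$, exactly at the uniform distribution. The paper's appendix only computes the uniform case explicitly and then cites the maximum-entropy property, so your Jensen/KL argument and your careful treatment of the equality case (including zero-weight edges, which the paper's proof glosses over for the converse direction) simply supply details the paper leaves implicit.
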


\begin{proof}
When all the weights are equal to a constant $C$ we have
\[
\forall e \in E': \frac{w(e)}{\sum_{o\in E'}w(o)}=\frac{C}{C|E'|}=\frac{1}{|E'|}
\] 
We then have
\begin{align*}
c(E') & = 2^{\sum_{e \in E'} \frac{1}{|E'|} \log_2(|E'|)} \\
& = 2^{\log_2(|E'|)} \\
& = |E'|
\end{align*}

In other words, both the cardinality and the effective cardinality of a weighted set of edges become equivalent when the weights are uniform. The effective cardinality is also maximum in this case, because the exponent is the entropy of the weight probability distribution, which is maximum when weights are uniform over edges. 
\end{proof}

\begin{lemma}
The effective cardinality satisfies the minimum cardinality property.
\end{lemma}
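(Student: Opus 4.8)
The plan is to reduce both assertions to the behaviour of the exponent
\[
H(E') = \sum_{e\in E'} \frac{w(e)}{s}\,\log_2\frac{s}{w(e)}, \qquad s=\sum_{o\in E'} w(o),
\]
which for a non-empty $E'$ with $s>0$ is precisely the Shannon entropy of the probability distribution $p_e = w(e)/s$. Since $c(E')=2^{H(E')}$ and the map $x\mapsto 2^{x}$ is a strictly increasing bijection from $\mathbb{R}$ onto the positive reals, every question about $c$ becomes a question about $H$: for non-empty $E'$ the value $c(E')$ is always positive, and $c(E')=1$ holds exactly when $H(E')=0$. The two standard entropy facts I will invoke are therefore $H\ge 0$ and the characterization of when $H=0$.

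For the first claim, $c(E')=0$ iff $E'$ is empty, the forward direction is immediate from the piecewise definition, which sets $c=0$ on the empty set. For the converse I would note that whenever $E'$ is non-empty the value is $2^{H(E')}$; because each summand $p_e\log_2(1/p_e)$ is non-negative on $[0,1]$, we have $H(E')\ge 0$, hence $c(E')=2^{H(E')}\ge 2^{0}=1>0$. Thus $c(E')=0$ forces $E'$ to be empty.

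For the second claim I would first observe that $c(E')=1$ is equivalent to $H(E')=0$ by the strict monotonicity of $2^{x}$. To characterize the vanishing of the entropy I would argue term by term: for $p_e\in(0,1]$ the quantity $p_e\log_2(1/p_e)$ is non-negative and equals $0$ only when $p_e=1$, while for $p_e=0$ the convention $x\log_2(1/x)\to 0$ makes the term $0$. Hence $H(E')=0$ forces every $p_e$ into $\{0,1\}$; since the $p_e$ sum to one, exactly one index $u$ satisfies $p_u=1$ and all others are $0$. Translating back through $p_e=w(e)/s$, this says $w(u)=s>0$ and $w(v)=0$ for all $v\neq u$, which is precisely the stated condition, and the converse is the direct evaluation $H=0$, hence $c=1$, for any such point-mass weight assignment.

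The only delicate point, and the step I expect to require the most care, is the degenerate configuration of a non-empty edge set all of whose weights are zero, where $s=0$ and the ratios $w(e)/s$ are undefined. I would handle this by observing that such a set carries no interaction and is excluded under the working assumption that a non-empty $E'$ has $s>0$ (an edge of weight zero contributes nothing and may be dropped); together with the boundary convention $x\log_2(1/x)\to 0$ as $x\to 0$, this guarantees that the formula is always evaluated on a genuine probability distribution and makes the term-by-term entropy argument watertight.
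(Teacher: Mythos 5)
Your proof is correct and follows essentially the same route as the paper's: both reduce the claim to the standard facts that the exponent is a Shannon entropy, that entropy is non-negative, and that it vanishes exactly for a deterministic (point-mass) distribution. You are somewhat more thorough than the paper — you explicitly verify the converse directions of both ``iff'' claims and flag the degenerate all-zero-weight case — but the underlying argument is identical.
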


\begin{proof}
When the set of edges is empty, then the effective cardinality is zero by definition.
When all weights are zero except only one weight that is greater than zero, then weight probability distribution is deterministic and the entropy is zero, therefore the effective cardinality will be 1.
\end{proof}

\begin{lemma}
\label{lemma-po}
The effective cardinality satisfies the consistent partial order property.
\end{lemma}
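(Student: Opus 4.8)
The plan is to reduce the statement to the strict concavity of the per-edge entropy contribution. Since $c(E')=2^{H(E')}$ and the map $x\mapsto 2^{x}$ is strictly increasing, it suffices to show that the more uniform set has the strictly larger entropy $H$. So I would first fix the formal setup suggested by the numeric examples (recall $W_2=\{9,5,5,1\}$ versus $W_3=\{9,8,2,1\}$): let the two weighted sets share the same size $n$ and the same weight sum $S=\sum_{o}w(o)$, let their weights agree on all but two edges $e_j$ and $e_k$, and let the differing pairs $(w_j,w_k)$ and $(w'_j,w'_k)$ satisfy $w_j+w_k=w'_j+w'_k$ with the primed pair \emph{strictly more uniform}, that is $|w'_j-w'_k|<|w_j-w_k|$.

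Next I would pass to probabilities $p_i=w_i/S$. Because $S$ is common to both sets and the weights coincide off $\{j,k\}$, the probabilities $p_i$ coincide for every $i\neq j,k$, and the common value $q=p_j+p_k=p'_j+p'_k$ is a fixed constant in both sets. Writing $H=\sum_i \phi(p_i)$ with $\phi(x)=-x\log_2 x$, every term with $i\neq j,k$ is identical across the two sets and cancels in the comparison. The whole claim therefore collapses to showing that, for fixed $q$, the two-term quantity $\phi(p_j)+\phi(p_k)$ is strictly larger when $|p_j-p_k|$ is smaller.

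To prove this I would parametrize the pair of fixed sum $q$ as $p_j=q/2+t$ and $p_k=q/2-t$, so that $|p_j-p_k|=2|t|$ and uniformity corresponds to small $|t|$, and define $g(t)=\phi(q/2+t)+\phi(q/2-t)$ on $[-q/2,q/2]$. The function $g$ is even, and $g'(t)=\phi'(q/2+t)-\phi'(q/2-t)$. Since $\phi''(x)=-\tfrac{1}{x\ln 2}<0$ on $(0,\infty)$, the derivative $\phi'$ is strictly decreasing, so $g'(t)<0$ for $t>0$; hence $g$ strictly decreases in $|t|$. Applying this with $|p'_j-p'_k|<|p_j-p_k|$ (equivalently, a smaller $|t|$ for the primed pair) gives $\phi(p'_j)+\phi(p'_k)>\phi(p_j)+\phi(p_k)$, so the primed entropy strictly exceeds the unprimed one and therefore $c'>c$, as required.

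The main obstacle is not the concavity computation, which is routine, but formalizing the comparison cleanly: one must state the ``more uniform'' condition precisely (equal sum together with strictly smaller spread on exactly the two differing edges) and verify that the reduction to a single pair really does cancel all common entropy terms. One should also confirm the degenerate cases where a weight vanishes, where $\phi$ is extended continuously by $\phi(0)=0$ so that $g$ remains continuous on the closed interval $[-q/2,q/2]$ and the monotonicity argument still applies up to the endpoints.
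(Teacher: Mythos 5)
Your proof is correct and follows essentially the same route as the paper: both arguments cancel the entropy terms on the shared weights and reduce the claim to showing that the two-term quantity $-x\lg x-(q-x)\lg(q-x)$, with the sum $q$ held fixed, strictly increases as the pair becomes more balanced. If anything, your derivation of strict monotonicity in $|t|$ from $\phi''<0$ is slightly more complete than the paper's Lemma~\ref{lemma-ent}, which establishes only the symmetry and the location of the maximum at $q/2$ rather than the monotone decrease away from it that the comparison actually requires.
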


\begin{proof}
Let $E'_1$ and $E'_2$ be two (edge) sets such that $|E'_1|=|E'_2|$ (both have the same cardinality). Let $W_1$ and $W_2$ be the corresponding sets of weights, where $\sum_{e1 \in E'_1} w(e1)=\sum_{e2 \in E'_2} w(e2) = S$ (the total weights are equal). 
Furthermore, let $|W_1 \bigcap W_2|=n-2, \{w_{11},w_{12}\} = W_1 - W_2,\{w_{21},w_{22}\} = W_2 - W_1$, where the '$-$' operator is the "set difference" operator (the two sets share the same weights except for two elements in each set), and $|w_{11}-w_{12}| < |w_{21}-w_{22}|$ (the weights of $W_1$ are more uniform than the weights of $W_2$). To prove that the effective cardinality satisfies the consistent partial ordering property, we need to prove that $c(E'_1)>c(E'_2)$.

Without loss of generality, we can assume that $w_{11} \ge w_{12}$ and $w_{21} \ge w_{22}$, therefore $w_{11}-w_{12} < w_{21}-w_{22}$. We then have 
\[
w_{11}+w_{12}=S - \sum_{w \in W_1 \bigcap W_2}w = w_{21}+w_{22}
\]
or
\[
\frac{w_{11}+w_{12}}{S}=1 - \sum_{w \in W_1 \bigcap W_2}\frac{w}{S} = \frac{w_{21}+w_{22}}{S} = L
\]
therefore 
\[
L \ge \frac{w_{21}}{S} > \frac{w_{11}}{S} \ge \frac{L}{2} \ge L-\frac{w_{11}}{S} > L-\frac{w_{21}}{S}
\] 
where $\frac{w_{12}}{S} = L-\frac{w_{11}}{S}$ and $\frac{w_{22}}{S} = L-\frac{w_{21}}{S}$. Then from Lemma \ref{lemma-ent} we have $h(L,\frac{w_{11}}{S}) > h(L,\frac{w_{21}}{S})$, or
\begin{align*}
-\frac{w_{11}}{S}lg(\frac{w_{11}}{S}) - (L-\frac{w_{11}}{S})lg(c-\frac{w_{11}}{S}) > \\ -\frac{w_{21}}{S}lg(\frac{w_{21}}{S}) - (L-\frac{w_{21}}{S})lg(c-\frac{w_{21}}{S})
\end{align*}
Therefore $H(E'_1) > H(E'_2)$, because the rest of the entropy terms (corresponding to $W_1 \bigcap W_2$) are equal, and consequently $c(E'_1)>c(E'_2)$.
\end{proof}

\begin{lemma}
\label{lemma-ent}
The quantity $h(C,x)=-x\lg(x) - (C-x) \lg(C-x)$ is symmetric around and maximized at $x=\frac{C}{2}$ for $C \ge x \ge 0$.
\end{lemma}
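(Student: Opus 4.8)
The plan is to establish the two assertions—symmetry and maximization—separately, since the symmetry part is essentially free and the maximization part is a short calculus argument. For symmetry, I would simply apply the substitution $x \mapsto C-x$ to $h(C,x)=-x\lg x-(C-x)\lg(C-x)$: this interchanges the two summands, so $h(C,C-x)=h(C,x)$. Because reflecting $x$ across $\tfrac{C}{2}$ sends $x$ to $C-x$, this identity is exactly symmetry about $\tfrac{C}{2}$, and no further work is needed.

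For the maximization I would regard $h$ as a one-variable function of $x$ on $[0,C]$ (with $C$ fixed) and differentiate. Writing $\lg t=\tfrac{\ln t}{\ln 2}$, the additive constants $\pm\tfrac{1}{\ln 2}$ produced by the two products cancel, leaving the clean expression
\[
\frac{\partial h}{\partial x}=\lg(C-x)-\lg x=\lg\frac{C-x}{x}.
\]
Setting the right-hand side to zero forces $C-x=x$, i.e. $x=\tfrac{C}{2}$, which is the unique interior stationary point.

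Next I would verify that this stationary point is a maximum rather than a minimum or inflection. The simplest route is to note that $\tfrac{\partial h}{\partial x}=\lg\tfrac{C-x}{x}$ is positive for $x<\tfrac{C}{2}$ and negative for $x>\tfrac{C}{2}$, so $h$ rises then falls; equivalently $\tfrac{\partial^2 h}{\partial x^2}=-\tfrac{1}{\ln 2}\!\left(\tfrac{1}{x}+\tfrac{1}{C-x}\right)<0$ on $(0,C)$, so $h$ is strictly concave and its lone critical point is the global maximum on the open interval. Finally I would dispose of the endpoints using the standard limit $t\lg t\to 0$ as $t\to 0$ (the footnote in Section~\ref{sec-effcar}), which extends $h$ continuously to $[0,C]$ with $h(C,0)=h(C,C)=-C\lg C$, while $h(C,\tfrac{C}{2})=-C\lg C+C$ exceeds this by $C>0$; hence the interior maximum dominates the boundary and $x=\tfrac{C}{2}$ is the global maximizer.

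The argument is essentially routine, so there is no serious obstacle; the only points needing mild care are the cancellation of the $\tfrac{1}{\ln 2}$ terms in the derivative—this is precisely what collapses the stationarity condition to the symmetric equation $C-x=x$—and the handling of the boundary values, where $x\lg x$ must be interpreted via its limit at $0$ so that $h$ is well defined and continuous on the whole interval $[0,C]$.
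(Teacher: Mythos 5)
Your proof is correct and follows essentially the same route as the paper's: symmetry via the reflection $x\mapsto C-x$ (the paper phrases it as $x=\tfrac{C}{2}\pm\delta$), and maximization by setting the derivative to zero after the constant terms cancel. You go further than the paper by checking concavity/sign of the derivative and the boundary values, which the paper omits, but the core argument is the same.
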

\begin{proof}
\[
h(C,\frac{C}{2}+\delta)=-(\frac{C}{2}+\delta)\lg(\frac{C}{2}+\delta) - (\frac{C}{2}-\delta) \lg (\frac{C}{2}-\delta) = h(C,\frac{C}{2}-\delta)
\]
Therefore $h(C,x)$ is symmetric around $c/2$.
Furthermore, $h(C,x)$ is maximized when 
\[
\frac{\partial h(C,x)}{\partial x} = 0 = -1 -\lg x + 1 + \lg(C-x)
\]
or
\[
\lg x = \lg(C-x)
\]
Therefore $h(C,x)$ is maximized at $x=C-x = \frac{C}{2}$.
\end{proof}


\end{document}